\newcommand{\PfSchFigDir}{.}
\newcommand{\TexComDir}{.}
\newtheorem{cor}{Corollary}
\newtheorem{lem}{Lemma}
\newcommand{\Append}[1]{{Appendix~\ref{#1}}}
\newcommand{\Corollary}[1]{{Corollary~\ref{#1}}}
\newcommand{\Fig}[1]{{Fig.~\ref{#1}}}
\newcommand{\Lemma}[1]{{Lemma~\ref{#1}}}
\newcommand{\Section}[1]{{Section~\ref{#1}}}
\newcommand{\Table}[1]{{Table~\ref{#1}}}
\newcommand{\CN}{\mathcal{CN}}
\newcommand{\Ex}{\mbox{$\mathbb{E}$}}
\newcommand{\Gcal}{\mathcal{G}}
\newcommand{\AbsSr}{{| \hspace{-0.5mm} S_r \hspace{-0.4mm} |}}
\newcommand{\FnRsumNq}{{I_1}}
\newcommand{\FnRsumQ}{{I_2}}
\newcommand{\Fu}{{F_{\hspace{-0.1cm} {}_U}}}
\newcommand{\Fw}{{F_{\hspace{-0.1cm} {}_W}}}
\newcommand{\Fx}{{F_{\hspace{-0.1cm} {}_X}}}
\newcommand{\FxGenCond}{{F_{\hspace{-0.1cm} {}_{X | \textrm{cond}}}}}
\newcommand{\FxCondSr}{{F_{{}_{X | \; {}{k_r^*=k, \AbsSr=n}}}}}
\newcommand{\Fy}{{F_{\hspace{-0.1cm} {}_Y}}}
\newcommand{\Fyk}{{F_{{\hspace{-0.1cm} {}_{Y_k}}}}}
\newcommand{\Fykr}{{F_{{\hspace{-0.1cm} {}_{Y_{k,r}}}}}}
\newcommand{\fz}{{f_{{\hspace{-0.1cm} {}_{Z}}}}}
\newcommand{\Fz}{{F_{{\hspace{-0.1cm} {}_{Z}}}}}
\newcommand{\Fzk}{{F_{{\hspace{-0.1cm} {}_{Z_k}}}}}
\newcommand{\GammaTilde}{{\widetilde{\Gamma}}}
\newcommand{\Hkr}{{H_{k,r}}}
\newcommand{\LFb}{{L_{{}_{\textrm{FB}}}}}
\newcommand{\MinRankingInPf}{{\NFb}}
\newcommand{\NcolInPf}{{\Nrb}}
\newcommand{\NFb}{{N_{\hspace{-0.05cm}{}_{\textrm{FB}}}}}
\newcommand{\Nrb}{{N_{\hspace{-0.05cm}{}_{\textrm{RB}}}}}
\newcommand{\NrowVal}{{n}}
\newcommand{\Nus}{{N_{\hspace{-0.05cm}{}_{\textrm{US}}}}}
\newcommand{\NQb}{{N_{\hspace{-0.05cm}{}_{\textrm{QB}}}}}
\newcommand{\NTx}{{N_{{}_{\textrm{T}}}}}
\newcommand{\PxGenCondQuan}{{P_{\hspace{-0.1cm}{}^{X^{\QuanSup} | \textrm{cond}}}}}
\newcommand{\QuanSup}{{{}^{\textrm{Q}}}}
\newcommand{\RFb}{{R_{{}_{\textrm{FB}}}}}
\newcommand{\Rkr}{{R_{k,r}}}
\newcommand{\Rsum}{{R_{{}_{\textrm{SUM}}}}}
\newcommand{\RsumRatio}{{R_{{}_{\textrm{SUM}}}^{\textrm{ratio}}}}
\newcommand{\Ukr}{{U_{k,r}}}
\newcommand{\UkrQuan}{{U_{k,r}^{{}^{\textrm{Q}}}}}
\newcommand{\Umn}{{U_{m,n}}}
\newcommand{\UmnQuan}{{U_{m,n}^{{}^{\textrm{Q}}}}}
\newcommand{\Wkr}{{W_{k,r}}}
\newcommand{\XrQuan}{{X_{r}^{{}^{\textrm{Q}}}}}
\newcommand{\Ykr}{{Y_{k,r}}}
\newcommand{\Zkr}{{Z_{k,r}}}
\newcommand{\AWGN}{{\textrm{AWGN}}}
\newcommand{\CDD}{{\textrm{CDD}}}
\newcommand{\CDF}{{\textrm{CDF}}}
\newcommand{\CQI}{{\textrm{CQI}}}
\newcommand{\FB}{{\textrm{FB}}}
\newcommand{\MAC}{{\textrm{MAC}}}
\newcommand{\MIMO}{{\textrm{MIMO}}}
\newcommand{\MISO}{{\textrm{MISO}}}
\newcommand{\MRC}{{\textrm{MRC}}}
\newcommand{\OFDMA}{{\textrm{OFDMA}}}
\newcommand{\OSTBC}{{\textrm{OSTBC}}}
\newcommand{\PDF}{{\textrm{PDF}}}
\newcommand{\PMF}{{\textrm{PMF}}}
\newcommand{\RB}{{\textrm{RB}}}
\newcommand{\SISO}{{\textrm{SISO}}}
\newcommand{\SNR}{{\textrm{SNR}}}
\newcommand{\TAS}{{\textrm{TAS}}}
\newcommand{\dB}{{\textrm{dB}}}
\newcommand{\eg}{{\it{e.g.,}}}
\newcommand{\Fstep}{{\textrm{Step}}}
\newcommand{\ie}{{\it{i.e.,}}}
\newcommand{\iid}{{\it{i.i.d.}}}
\newcommand{\st}{{\it{s.t.}}}
\newcommand{\OneColumn}[1]{#1}
\newcommand{\TwoColumn}[1]{}
\newcommand{\bItem}{\begin{itemize}}
\newcommand{\eItem}{\end{itemize}}
\newcommand{\bEq}{\begin{equation}}
\newcommand{\eEq}{\end{equation}}
\newcommand{\bEqn}{\begin{equation*}}
\newcommand{\eEqn}{\end{equation*}}
\newcommand{\bEqnarray}{\begin{eqnarray}}
\newcommand{\eEqnarray}{\end{eqnarray}}
\newcommand{\bCenter}{\begin{center}}
\newcommand{\eCenter}{\end{center}}
\begin{document}

\title{Sum rate analysis of a reduced feedback \OFDMA{} system employing joint scheduling and diversity}
\author
{
    Seong-Ho (Paul) Hur{\hspace{0.08cm}${}^*$}
    \IEEEmembership{Student Member, IEEE,}
    Bhaskar~D.~Rao,
    \IEEEmembership{Fellow, IEEE,}
\thanks
{
    ${}^*$: Corresponding author.
}%
\thanks
{
    S.~H.~(Paul)~Hur and B.~D.~Rao are with the Department of Electrical and Computer Engineering, University of California, San Diego (UCSD), 9500 Gilman Drive La Jolla, CA 92093 (e-mail: shhur@ucsd.edu, brao@ucsd.edu).
}
}
\maketitle

\begin{abstract}
\label{sAbstract}
We consider joint scheduling and diversity to enhance the benefits of multiuser diversity in an \OFDMA{} system.
The \OFDMA{} spectrum is assumed to consist of $\Nrb$ resource blocks and the reduced feedback scheme consists of each user feeding back channel quality information (\CQI) for only the best-$\NFb$ resource blocks.
Assuming largest normalized \CQI{} scheduling and a general value for $\NFb$, we develop a unified framework to analyze the sum rate of the system for both the quantized and non-quantized \CQI{} feedback schemes.
Based on this framework, we provide closed-form expressions for the sum rate for three different multi-antenna transmitter schemes; Transmit antenna selection (\TAS), orthogonal space time block codes (\OSTBC) and cyclic delay diversity (\CDD).
Furthermore, we approximate the sum rate expression and determine the feedback ratio $(\frac{\NFb}{\Nrb})$ required to achieve a sum rate comparable to the sum rate obtained by a full feedback scheme.
\end{abstract}

\begin{IEEEkeywords}
\OFDMA, multiuser diversity, partial feedback, scheduling, sum rate.
\end{IEEEkeywords}

\OneColumn
{
    \begin{center}
        \bfseries EDICS Category
    \end{center}
    {\small
    \begin{tabular}{l}
    \hspace{1.0cm} MSP: \MIMO{} Communications and signal processing\\
    \hspace{1.0cm} MSP-CAPC: \MIMO{} capacity and performance\\
    \hspace{1.0cm} MSP-MULT: \MIMO multi-user and multi-access schemes
    \end{tabular}
    }

    \newpage
}

\section{Introduction}
\label{sIntroduction4RsumAnal}
Diversity is a common technique employed to mitigate the harmful effects of fading in a wireless channel and to achieve reliable communication \cite{bSklarCm972,bBrennanProc03,bMietznerScom09}.
This is achieved by creating and combining independent multiple copies of a signal between a transmitter and a receiver over various dimensions such as  time, frequency and space \cite{bSklarCm972,bBrennanProc03,bMietznerScom09}.
On the other hand, when fading is viewed in a multiuser communication context and scheduling of users is introduced for sharing the common resources, multiuser diversity can be exploited to significantly increase the system throughput \cite{bKnoppIcc95,bViswanathTit02}.
To exploit multiuser diversity inherent in a wireless network with multiple users, it is necessary to schedule a transmission, at any scheduling instant, to a user with the best channel condition \cite{bKnoppIcc95,bViswanathTit02}, which is also known as opportunistic scheduling \cite{bLiuCn03}.
However, fairness becomes an issue in a system with asymmetric user fading statistics which leads to channel resources being dominated by strong users \cite{bViswanathTit02}.
In order to provide fairness, in addition to exploiting multiuser diversity, a normalized signal to noise ratio (\SNR{})-based or channel quality information (\CQI{})-based scheduling scheme is considered \cite{bHarthiTwc07}.
This can be regarded as a form of proportional fair scheduling \cite{bChoiTvt07}.

The gain from multiuser diversity usually increases with the number of independent users in a system and with a large dynamic range for the channel fluctuation within the time of the scheduling window \cite{bViswanathTit02,bHurPrep10}.
To enhance the sum rate of a system, joint consideration of scheduling and traditional diversity schemes such as transmit antenna selection (\TAS) and maximal ratio combining (\MRC) at a receiver is addressed in \cite{bChenTcom06,bPughTsp10} and the references therein.
The basic principle of joint consideration is to enhance multiuser diversity by increasing the number of independent candidates for selection directly proportional to the number of transmit antennas \cite{bChenTcom06,bPughTsp10}, or by increasing the variation in the channels between the transmitter and receivers as in the opportunistic beamforming methods \cite{bViswanathTit02,bHurPrep10}.
For the purpose of user scheduling and rate adaptation at the transmitter, information about the channel quality has to be fed back to the transmitter by the receivers.
As the number of users as well as the antennas at the transmitter increases, the amount of feedback becomes large placing an enormous burden on the feedback link traffic.
In particular, the amount of feedback may become prohibitive when we consider \OFDMA{} systems which have emerged as the basic physical layer communication technology to meet the high data rate services in future wireless communication standards \cite{bTs36.201.201003}.
With the goal of exploiting frequency diversity in user scheduling, subcarriers in \OFDMA{} systems are grouped into resource blocks and used as the basic unit for user scheduling \cite{bTs36.201.201003}.
When we consider joint scheduling and diversity in \OFDMA{} systems, feedback may be needed for all the resource blocks as well as the antennas, which may easily overwhelm the feedback link traffic even for a system with a small number of users.
This motivates our research into schemes with reduced feedback.

Feedback reduction has received much interest in wireless communications research \cite{bLoveJsac08}.
There are two main methods: feedback rate reduction related to quantization, and feedback number reduction related to reducing the number of parameters being fed back.
See, for example,  \cite{bErikssonPrcd07,bHasselTwc07} and references therein.
For the feedback number reduction, a threshold-based technique is usually considered, so that only the users with a large probability of being scheduled feedback their information \cite{bGesbertIcc04}.
Let $\Nrb$ denote the total number of resource blocks in \OFDMA{} systems or spatial degrees of freedom in a space division multiple access system.
The feedback number reduction can be obtained by letting users feed back information about only the best-$\NFb$ blocks or fewer modes when $\NFb$ is smaller than $\Nrb$ \cite{bJungIscit07,bLeinonenTwc09,bPughTsp10,bChoiTmc10}.
For \OFDMA{} systems employing joint scheduling and diversity, the performance of schemes employing feedback about the best-$\NFb$ blocks, for a general $\NFb$, has not been rigorously studied.
Only the performance for a best-$1$ feedback ($\NFb =1$) or a full feedback scheme ($\NFb = \Nrb$) without consideration of diversity options are given in \cite{bJungIscit07}.
The analysis in \cite{bChoiTmc10} is for a general $\NFb$.
However, it deals with a single-input single-output (\SISO) system with quantized \CQI{} feedback and consequently does not consider the various multi-antenna diversity techniques.

In this paper, we consider an \OFDMA{} system employing joint scheduling as well as using a multi-antenna transmit diversity technique.
Various diversity options are considered in this work; Transmit antenna selection (\TAS), orthogonal space time block codes (\OSTBC) and cyclic delay diversity (\CDD).
For rate adaptation and user scheduling, we assume that users feedback to the transmitter the \CQI{} values of the best-$\NFb$ resource blocks out of a total of $\Nrb$ values.
For a practical variant of the feedback system, we also consider quantized \CQI{}.
The transmitter schedules a transmission in each resource block to a user with largest normalized \CQI{} among users who provided feedback, where normalization is considered to assure fairness across users.
We develop a unified framework consisting of four steps to analyze the sum rate of the system with partial feedback of either non-quantized or quantized \CQI{} for a general $\NFb$, and present closed-form expressions.

Our results show that the performance gap between a full feedback scheme and a best-$1$ ($\NFb = 1$) feedback scheme is not negligible even when there are a moderate number of users.
Then the question arises as to how many \CQI{} values should be fed back to the transmitter to make the gap negligible while minimizing uplink feedback overhead.
This issue is also addressed in our work based on the derived equations for the sum rate.
Specifically, we approximate the sum rate ratio, \ie{} the ratio of the sum rate obtained by a partial feedback scheme to the sum rate obtained by a full feedback scheme.
We express the sum rate ratio as a function of the feedback ratio $(\frac{\NFb}{\Nrb})$, \ie{} the amount of feedback normalized by the total number of blocks.
We show that the sum rate ratio is approximately the same as the probability of the complement of a scheduling outage which corresponds to the case that no user provides \CQI{} to the transmitter for a certain block.
This enables us to provide a simple equation to determine the required feedback ratio for a pre-determined sum rate ratio.
In the case of quantized \CQI{} feedback, we also discuss a feedback design strategy to enhance the sum rate under a fixed feedback load.

In summary, the paper has three main contributions.
First, we present the cumulative distribution function (\CDF) for the \SNR{} of a selected user in the best-$\NFb$ feedback system.
This result has a convenient form in terms of a {\it polynomial} of the \CDF{} of each user's \CQI{}, which is {\it amenable} to further analytical evaluation.
Second, we develop a unified framework to analyze the sum rate of a reduced feedback \OFDMA{} system employing joint scheduling and diversity, and derive closed-form expressions for both the non-quantized and quantized \CQI{} feedback schemes.
Third, we approximate the sum rate result and develop an analytical and simple expression for the required feedback ratio to achieve a pre-determined sum rate ratio.

This paper is organized as follows.
In \Section{sSystemModelAndOverview}, we describe the system model and provide an overview of the unified framework for the analysis.
In \Section{sPerformanceAnalysis4Tas}, we develop the framework and analyze the sum rate of the \TAS{} scheme.
In \Section{sPerformanceAnalysis4OstbcCdd}, we analyze the sum rate for both \OSTBC{} and \CDD{} schemes employing the framework.
In \Section{sRelation}, we develop the relation between the sum rate ratio and feedback ratio, and derive the expression for the required feedback ratio.
In \Section{sNumericalResults4RsumAnal}, we show numerical results and they support the analytical results.
We conclude in \Section{sConclusion4RsumAnal}.

\section{System Model and Overview of the framework}
\label{sSystemModelAndOverview}
In this section, we first describe the system model and then provide an overview of the unified framework for the sum rate analysis.

\subsection{System model}
\label{sSystemModel4RsumAnal}
We consider a multiple-input single-output (\MISO) complex Gaussian broadcast channel with one base station equipped with $\NTx$ transmit antennas and $\Nus$ users each equipped with a single antenna, as shown in \Fig{fSysBldForRsumAnal}.
An \OFDMA{} system is assumed.
In a multiuser \OFDMA{} system the throughput is larger when the resource allocation is flexible and has high granularity, \eg{} assignment at the individual subcarrier level.
However, the complexity and feedback overhead can be prohibitive, calling for simpler approaches.
In our work, the overall subcarriers are grouped into $\Nrb$ resource blocks ($\RB$), and each block contains contiguous subcarriers.
The assignment is done at the block level, \ie{} a resource block is assigned to a user.
The block size is assumed to be known and in practice can be determined at the medium access control (\MAC) layer taking into account the number of users.
For this system, we showed in \cite{bHurPrep10} that the optimal channel selectivity maximizing the sum rate is flat within each block and independent across blocks.
We assume the optimal channel selectivity condition in our analysis of the system performance.\OneColumn
{
   \begin{figure}
       \centering
           \includegraphics[width=12.0cm]{\PfSchFigDir/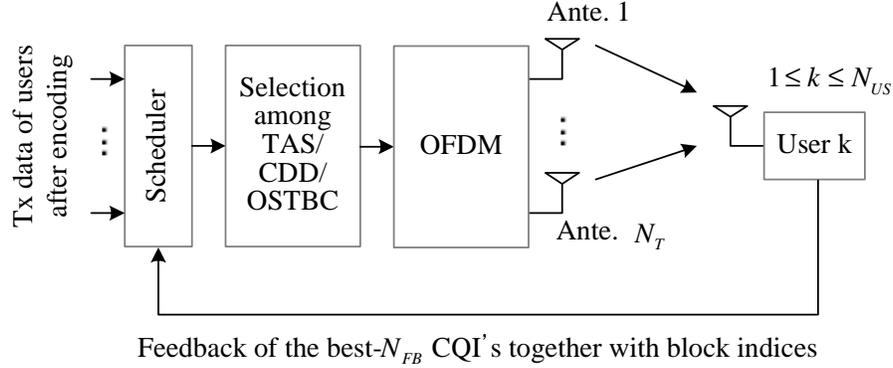}
           \caption{System block diagram of a multiuser \OFDMA{} system.}
           \label{fSysBldForRsumAnal}
   \end{figure}
}\TwoColumn
{
   \begin{figure}
       \centering
           \includegraphics[width=8.0cm]{\PfSchFigDir/SysBldTasCddOstbc.eps}
           \caption{System block diagram of a multiuser \OFDMA{} system.}
           \label{fSysBldForRsumAnal}
   \end{figure}
}

Let $H_{k,r,i}$ denote the channel between transmit antenna-$i$ and the receive antenna of user-$k$ for resource block-$r$, where $1 \leq k \leq \Nus$, $1 \leq r \leq \Nrb$ and $1 \leq i \leq \NTx$.
We assume that $H_{k,r,i}$ follows a complex Gaussian distribution, \ie{} $\CN(0,c_k)$,\footnote{$\CN (\mu,$$\sigma^2)$ denotes a circularly symmetric complex Gaussian distribution with mean $\mu$ and variance $\sigma^2$.} where $c_k$ denotes the average channel power of user-$k$ and reflects the fact that the users are distributed asymmetrically.
We assume that $c_k$ for each user is known to the transmitter by infrequent feedback from users.
We also assume that $H_{k,r,i}$ is independent across users ($k$), blocks ($r$) and transmit antennas ($i$).
Then, the received signal of user-$k$ at block-$r$ satisfies the equation
    \begin{equation}
        \label{eChannelForRsumAnal}
        y_{k,r} = H_{k,r} \; s_{k,r} + n_{k,r}
    \end{equation}
where $s_{k,r}$ is the transmitted symbol and $n_{k,r}$ is additive white Gaussian noise (\AWGN) with $\CN (0,\sigma_w^2)$. We note that $\Hkr$ is the equivalent channel depending on the specific diversity technique and is a function of $H_{k,r,i}$, which will be shown in later sections.

For reliable and adaptive communication, the knowledge of the channel between the transmitter and receiver is required at the transmitter.
For this purpose, we assume that channel quality information (\CQI{}) of resource blocks is fed back from users to the transmitter.
The feedback policy is that users measure \CQI{} for each block at their receiver and feed back the \CQI{} values of the best-$\NFb$ resource blocks from among the total $\Nrb$ values \cite{bJungIscit07}.
Since we assume that the users are asymmetrically distributed in their average \SNR{}, scheduling is based on \CQI{} normalized by each user's mean value at the transmitter.
For each block, the user with largest normalized \CQI{} is chosen from among the users who fed back \CQI{} to the transmitter for that block.
If no user provides \CQI{} for a certain block, \ie{} the case of a scheduling outage in the block \cite{bJungIscit07}, we assume that the transmitter does not utilize that block.
However, one can easily incorporate other variations such as round-robin scheduling or a scheduling scheme which maintains the previously assigned user.
For diversity, we consider three different multiple transmit antenna techniques; transmit antenna selection (\TAS) \cite{bChenTvt05}, cyclic delay diversity (\CDD) \cite{bDammannGcom01},\footnote{For \CDD{}, we consider that phases are multiplied on the basis of a block to maintain the characteristic of flat fading inside a block. In a strict sense, the scheme we consider is classified as \CDD{} when the block consists of a single subcarrier, and as the frequency domain opportunistic beamforming when the block consists of more than one subcarriers \cite{bViswanathTit02}.} and orthogonal space time block codes (\OSTBC) \cite{bAlamouti98}.
Let $\Zkr$ denote \CQI{} of user-$k$ at block-$r$, which will be the starting point of the analysis.
Then, $\Zkr$ depends on the diversity technique, the noise variance and channel $H_{k,r,i}$.

As the number of users increases, the amount of feedback will be prohibitive for a full feedback scheme, \ie{} \CQI{} feedback for all the resource blocks, so that we focus on the sum rate for partial feedback schemes with a general $\NFb$.
Instead of investigating the asymptotic property of the sum rate for a very large or infinite number of users \cite{bSharifTit05,bSongTcom06}, we focus on the exact sum rate for the system with a finite number of users.
Specifically, we develop a unified framework consisting of four steps to analyze the sum rate of this system with partial feedback of either non-quantized or quantized \CQI{}, and present closed-form expressions.
An overview of the framework is provided next.

\subsection{Overview of the unified framework}
\label{sGeneralApproach}
The framework for the sum rate analysis consists of four steps, where the $n$-th step is denoted as \Fstep-$n$.
We first discuss the analysis in the non-quantized \CQI{} case.
We find $\Fzk$ in \Fstep-$1$, \ie{} the \CDF{} of $\Zkr$ which is the \CQI{} of user-$k$ at block-$r$ at a receiver.\footnote{Since we assume that blocks are identically distributed, for notational simplicity, we omit $r$ in ${{F_{{\hspace{-0.1cm} {}_{Z_{k,r}}}}}}$, which is also the case for other notations of \CDF{}s.}
This depends on the choice of the diversity technique.
We find $\Fyk$ in \Fstep-$2$, \ie{} the \CDF{} of $\Ykr$ denoting the \SNR{} of user-$k$ for resource block-$r$ as seen by the transmitter as a consequence of partial feedback.
We find $\FxGenCond$ in \Fstep-$3$, \ie{} the conditional \CDF{} of $X_{r}$ denoting the \SNR{} of a selected user as a consequence of scheduling.
The conditioning in \Fstep-$3$ is related to the asymmetric user distribution in their average \SNR{} and the number of contending users for the block.
The important characteristics of $\FxGenCond$ is that it has a convenient form in terms of a {\it polynomial in $\Fzk$}, which is {\it amenable} to further integration to obtain the sum rate in \Fstep-$4$.
Thus, once we find $\Fzk$ and we have the integration result for a throughput equation with respect to an arbitrary power of $\Fzk(x)$, \ie{} $\int_0^\infty \log_2(1+x) \; d\{ \Fzk(x) \}^n$ for an arbitrary positive integer $n$, we can obtain closed-form sum rate expressions in a straightforward manner.

In the quantized \CQI{} case, following the same approach as the first two steps in the non-quantized case, we find $\Fw$, the \CDF{} of $\Wkr$ denoting the normalized \CQI{} at a receiver and $\Fu$, the \CDF{} of $\Ukr$ denoting the normalized \CQI{} as seen by the transmitter.
Then, we find $\PxGenCondQuan$ in \Fstep-$3$, \ie{} the conditional probability mass function (\PMF{}) of $\XrQuan$, the \SNR{} of a selected user.
By taking an average of throughputs over the \PMF{} found, we can obtain closed-form sum rate expressions in \Fstep-$4$.
For easy reference, we summarize the steps in \Table{tMainStepsForRsumAnal}.
\OneColumn
{
    \begin{table}[h]
        \centering
        \begin{threeparttable}
            \centering
            \caption{The main steps for the unified framework to obtain the sum rate.}
            \label{tMainStepsForRsumAnal}
            \begin{tabular}[c]{|c||c|c||c|c|}
                \hline
                                \begin{tabular}[c]{c} {\it Framework} \end{tabular}   &  \multicolumn{2}{|c||}{Non-quantized \CQI{} feedback} & \multicolumn{2}{|c|}{Quantized \CQI{} feedback} \\
                \cline{2-5}
                                &   Random variable   &     Output{}  &
                                    Random variable   &     Output{} \\
                \hline
                        \Fstep-$1$ &   \hspace{-1.1cm} $\Zkr$: \CQI{} at a receiver   &      $\Fzk$  &
                                   $\Wkr$: Normalized \CQI{} at a receiver   &     $\Fw$ \\
                \hline
                        \begin{tabular}[c]{c} \Fstep-$2$    \end{tabular} &   $\Ykr$: \SNR{} seen at a transmitter   &   $\Fyk$  &
                                   \hspace{-0.0cm} $\Ukr$: Normalized \CQI{} seen at a transmitter   &   $\Fu$ \\
                \hline
                        \begin{tabular}[c]{c} \Fstep-$3$    \end{tabular} &   \hspace{-0.3cm} $X_{r}$: \SNR{} of a selected user    &      $\FxGenCond$  &
                                   \hspace{-0.7cm} $\XrQuan$: \SNR{} of a selected user  &      $\PxGenCondQuan$ \\
                \hline
                        \Fstep-$4$ &   \multicolumn{2}{|c||}{\hspace{-0.2cm} $ \Ex_{\textrm{cond}} \Ex_{{}_{X_r}}[\log_2 (1+X_r) | \textrm{cond}] $ } & \multicolumn{2}{|c|}{\hspace{-0.2cm} $\Ex_{\textrm{cond}} \Ex_{\XrQuan}[\log_2 (1+\XrQuan) | \textrm{cond}]$ }   \\
                \hline
            \end{tabular}
            \begin{tablenotes}
                \item[] $k$: user index, $r$: block index.
            \end{tablenotes}
        \end{threeparttable}
    \end{table}
}
\TwoColumn
{
    \begin{table}[h]
        \centering
        \begin{threeparttable}
            \centering
            \caption{The main steps for the unified framework to obtain the sum rate.}
            \label{tMainStepsForRsumAnal}
            \begin{tabular}[c]{|c||c|c|}
                \hline
                                \begin{tabular}[c]{c} {\it Framework} \end{tabular}   &  \multicolumn{2}{|c|}{Non-quantized \CQI{} feedback}\\
                \cline{2-3}
                                &   Random variable   &     Output{}\\
                \hline
                        \Fstep-$1$ &   \hspace{-1.1cm} $\Zkr$: \CQI{} at a receiver   &      $\Fzk$\\
                \hline
                        \begin{tabular}[c]{c} \Fstep-$2$    \end{tabular} &   $\Ykr$: \SNR{} seen at a transmitter   &   $\Fyk$\\
                \hline
                        \begin{tabular}[c]{c} \Fstep-$3$    \end{tabular} &   \hspace{-0.3cm} $X_{r}$: \SNR{} of a selected user    &      $\FxGenCond$\\
                \hline
                        \Fstep-$4$ &   \multicolumn{2}{|c|}{\hspace{-0.2cm} $ \Ex_{\textrm{cond}} \Ex_{{}_{X_r}}[\log_2 (1+X_r) | \textrm{cond}] $ }\\
                \hline
                \hline
                                \begin{tabular}[c]{c} {\it Framework} \end{tabular}& \multicolumn{2}{|c|}{Quantized \CQI{} feedback} \\
                \cline{2-3}
                                &   Random variable   &     Output{}\\
                \hline
                        \Fstep-$1$ &  $\Wkr$: Normalized \CQI{} at a receiver   &     $\Fw$ \\
                \hline
                        \begin{tabular}[c]{c} \Fstep-$2$    \end{tabular} &   \hspace{-0.0cm} $\Ukr$: Normalized \CQI{} seen at a transmitter   &   $\Fu$ \\
                \hline
                        \begin{tabular}[c]{c} \Fstep-$3$    \end{tabular} & \hspace{-0.7cm} $\XrQuan$: \SNR{} of a selected user  &      $\PxGenCondQuan$ \\
                \hline
                        \Fstep-$4$ &  \multicolumn{2}{|c|}{\hspace{-0.2cm} $\Ex_{\textrm{cond}} \Ex_{\XrQuan}[\log_2 (1+\XrQuan) | \textrm{cond}]$ }   \\
                \hline
            \end{tabular}
            \begin{tablenotes}
                \item[] $k$: user index, $r$: block index.
            \end{tablenotes}
        \end{threeparttable}
    \end{table}
}

In summary, \Fstep-$1$ of the unified framework depends on the diversity technique.
The next two steps (\Fstep-$2$ and \Fstep-$3$) depend on the feedback and scheduling policy.
\Fstep-$4$ involves evaluating the performance measure.
We explain the procedure by providing details of the four steps for the \TAS{} scheme in \Section{sPerformanceAnalysis4Tas}.
Then in \Section{sPerformanceAnalysis4OstbcCdd}, we focus on finding the \CDF{} of $\Zkr$ in \Fstep-$1$ for \OSTBC{} and \CDD{}.
\Fstep-$2$ and \Fstep-$3$ do not require much additional effort, and we provide the sum rate result utilizing \Fstep-$4$.

\section{Sum rate analysis with application to \TAS{}}
\label{sPerformanceAnalysis4Tas}
In this section, we explain the details of the framework, consisting of the four steps in \Table{tMainStepsForRsumAnal}, with application to the transmit antenna selection (\TAS)-based diversity scheme for both non-quantized \CQI{} and quantized \CQI{}.

\subsection{Sum rate analysis for non-quantized \CQI}
\label{sNonQSumRate4Tas}
\subsubsection{\Fstep-$1$, finding $\Fzk(x)$}
\label{sStepOne4Nq}
This step consists of finding the distribution of \CQI.
In \TAS{}, a transmit antenna with the best channel condition among all the transmit antennas is selected for transmission \cite{bChenTvt05}.
Thus, the equivalent channel at block-$r$ of user-$k$ is a channel with maximum \CQI{} across transmit antennas, \ie{} $H_{k,r}= H_{k,r,i^*}$ where $i^* = \arg \max_{1 \leq i \leq \NTx} |H_{k,r,i}|^2$.
Since we assume that $H_{k,r,i}$ follows $\CN(0,c_k)$, $|H_{k,r,i}|^2$ follows the Gamma distribution $\Gcal(1,\frac{1}{c_k})$ \cite[(17.6)]{bJohnsonWiley94}.
Here, $\Gcal(\alpha, \beta)$ denotes the Gamma distribution whose \CDF{} is given by \cite[(17.3)]{bJohnsonWiley94}
    \begin{equation}
        \label{eGammaCdf}
        F(x) = \GammaTilde (\alpha, \beta x) = \frac{1}{\Gamma(\alpha)} \int_0^{\beta x} t^{\alpha-1} e^{-t} dt,
    \end{equation}
where $\GammaTilde(\cdot,\cdot)$ is the incomplete Gamma function ratio given by $\GammaTilde(a,x) = \frac{1}{\Gamma(a)} \int_0^{x} t^{a-1} e^{-t} dt$ \cite[(17.3)]{bJohnsonWiley94} and $\Gamma(\cdot)$ is the Gamma function given by $\Gamma(a) = \int_0^{\infty} t^{a-1} e^{-t} dt$ \cite{bGradshteynAp00}.
Then, equivalent \CQI{} in \TAS{} is $\Zkr = |H_{k,r}|^2 = \max_{1 \leq i \leq \NTx} |H_{k,r,i}|^2$.
From the assumption of the independent and identical distribution (\iid{}) for $H_{k,r,i}$'s in $i$, the \CDF{} of $\Zkr$ is given by\OneColumn
{
    \begin{equation}
        \label{eFzkTas}
        \hspace{-0.0cm} \Fzk(x) = \Pr \big \{ \Zkr \leq x \big \} \stackrel{(a)}{=}  \big[ \Pr \big \{ |H_{k,r,i}|^2 \leq x \big \} \big]^\NTx \stackrel{(b)}{=}  \big[\GammaTilde(1, \tfrac{x}{c_k})\big]^\NTx
    \end{equation}
}\TwoColumn
{
    \begin{equation*}
        \hspace{-1.0cm} \Fzk(x) = \Pr \big \{ \Zkr \leq x \big \} \stackrel{(a)}{=}  \big[ \Pr \big \{ |H_{k,r,i}|^2 \leq x \big \} \big]^\NTx
    \end{equation*}
    \begin{equation}
        \label{eFzkTas}
        \hspace{1.0cm} \stackrel{(b)}{=}  \big[\GammaTilde(1, \tfrac{x}{c_k})\big]^\NTx
    \end{equation}
}
where $(a)$ follows from the order statistics \cite[2.1.1]{bDavidJwas04} that $\Zkr$ is the maximum of independent $|H_{k,r,i}|^2$s, and $(b)$ follows from the fact that $|H_{k,r,i}|^2$ has the distribution $\Gcal(1,\frac{1}{c_k})$.
We note that the \SNR{} at block-$r$ of user-$k$ is $ \SNR_{k,r}=\rho \Zkr$ where $\rho = P / \sigma_w^2$ when the total transmit power is $P$.

\subsubsection{\Fstep-$2$, finding $\Fyk(x)$}
\label{sStepTwo4Nq}
This step considers the distribution of \CQI{} as a result of partial feedback.
As a reminder, each user feeds back the best-$\NFb$ \CQI{} values to the transmitter.
Let $Z_{k,(\ell)}$ denote the order statistics of $\Zkr$'s of user-$k$, where $Z_{k,(1)} \leq \cdots \leq Z_{k,(\NcolInPf)}$.
Then, the feedback scheme is equivalent to each user determining the order statistics for its \CQI{} and feeding back \CQI{} $Z_{k,(\ell)}$'s, for $\Nrb - \MinRankingInPf + 1 \leq \ell \leq \Nrb$ and the corresponding resource block indices.
Let $\Ykr$ denote the \SNR{} corresponding to received \CQI{} at the transmitter for user-$k$ at block-$r$ through feedback.
If user-$k$ provides feedback containing \CQI{} for block-$r$, then based on the \iid{} assumption of $\Zkr$'s in $r$, the \SNR{} $\Ykr$ viewed from the transmitter can be interpreted as any one of the best-$\MinRankingInPf$ values multiplied by $\rho$.
To capture this aspect, let $\Rkr$ denote a random variable with a probability mass function of $\Pr \{ \Rkr = \ell \} = \frac{1}{\MinRankingInPf}$, for $\NcolInPf - \MinRankingInPf + 1 \leq \ell \leq \NcolInPf$.
Then $\Ykr$ is given by
\begin{equation}
    \label{eYk}
    \Ykr = \rho Z_{k,(\Rkr)}.
\end{equation}
The \CDF{} of $\Ykr$, $\Fyk(x)$, is given in the following lemma.
\begin{lem}
    \label{lFyk}
    For $\Fzk(x)$ in \eqref{eFzkTas}, the \CDF{} of $\Ykr$ in \eqref{eYk} is given by
        \begin{equation}
            \label{eFy00}
                \Fyk(x)= \sum_{m=0}^{\MinRankingInPf-1} e_1(\NcolInPf,\MinRankingInPf,m) \{ \Fzk(\tfrac{x}{\rho}) \}^{\NcolInPf-m}
        \end{equation}
    where
        \begin{equation}
            \label{eE1abm}
                e_1(\NcolInPf,\MinRankingInPf,m)= \sum_{\ell=m}^{\MinRankingInPf-1} \tfrac{\MinRankingInPf - \ell}{\MinRankingInPf} \tbinom{\NcolInPf}{\ell} \tbinom{\ell}{m} (-1)^{\ell-m}.
        \end{equation}
\end{lem}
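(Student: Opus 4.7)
My plan is to derive \eqref{eFy00} by conditioning on the random index $\Rkr$, substituting the standard formula for the CDF of an order statistic, and then reorganizing the resulting double sum as a polynomial in $\Fzk(x/\rho)$. Specifically, since $\Rkr$ is uniform on $\{\NcolInPf - \MinRankingInPf + 1, \ldots, \NcolInPf\}$ and is independent of the order statistics $Z_{k,(\ell)}$, I would first write
\begin{equation*}
    \Fyk(x) \;=\; \Pr\{\rho Z_{k,(\Rkr)} \leq x\} \;=\; \frac{1}{\MinRankingInPf} \sum_{\ell = \NcolInPf - \MinRankingInPf + 1}^{\NcolInPf} F_{Z_{k,(\ell)}}(x/\rho).
\end{equation*}
Using the standard expression $F_{Z_{k,(\ell)}}(y) = \sum_{j=\ell}^{\NcolInPf} \binom{\NcolInPf}{j} F^{j}(1-F)^{\NcolInPf-j}$ with $F \equiv \Fzk(y)$, which follows from the fact that $\{Z_{k,(\ell)} \leq y\}$ is the event that at least $\ell$ of the $\NcolInPf$ i.i.d.\ samples fall in $[0,y]$, I would obtain a double sum in $\ell$ and $j$.

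Next, I would swap the order of summation, using that for each $j$ the inner count of admissible $\ell$'s is $j - (\NcolInPf - \MinRankingInPf + 1) + 1 = j - \NcolInPf + \MinRankingInPf$. Substituting $m' = \NcolInPf - j$ and using $\binom{\NcolInPf}{\NcolInPf - m'} = \binom{\NcolInPf}{m'}$, this collapses to
\begin{equation*}
    \Fyk(x) \;=\; \sum_{m'=0}^{\MinRankingInPf - 1} \frac{\MinRankingInPf - m'}{\MinRankingInPf} \binom{\NcolInPf}{m'} F^{\NcolInPf - m'} (1-F)^{m'}.
\end{equation*}

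The final step is to expand $(1-F)^{m'} = \sum_{n=0}^{m'} \binom{m'}{n} (-1)^n F^n$ and re-index by the total power $m = m' - n$, so that the exponent of $F$ equals $\NcolInPf - m$. Using $\binom{m'}{m'-m} = \binom{m'}{m}$ and renaming $m' \to \ell$ in the inner sum yields precisely the coefficient $e_1(\NcolInPf, \MinRankingInPf, m)$ in \eqref{eE1abm}, establishing \eqref{eFy00}.

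The whole argument is bookkeeping rather than a conceptual hurdle; the only place one has to be careful is keeping the summation limits consistent across the two index swaps (first the exchange of $\ell$ and $j$, then the passage from $(m', n)$ to $(m', m)$). A useful sanity check along the way is the endpoint case $\MinRankingInPf = \NcolInPf$, in which case only the top-$\NcolInPf$ feedback is received and $\Fyk$ should reduce to the mean of $F_{Z_{k,(\ell)}}(x/\rho)$ over all $\ell$; and the case $\MinRankingInPf = 1$, in which $\Fyk(x) = F_{Z_{k,(\NcolInPf)}}(x/\rho) = \Fzk(x/\rho)^{\NcolInPf}$, corresponding to keeping only the leading monomial $F^{\NcolInPf}$ with coefficient $e_1(\NcolInPf,1,0) = 1$.
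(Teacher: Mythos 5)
Your proposal is correct and follows essentially the same route as the paper's proof: condition on the uniform rank $\Rkr$, write the order-statistic \CDF{} as a binomial sum (the paper phrases this via the incomplete Beta function $I_{\Fzk(x)}(m,\NcolInPf-m+1)$ and its summation form, which is exactly the expression you use), swap the summation order to get $\sum_{\ell=0}^{\MinRankingInPf-1}\tfrac{\MinRankingInPf-\ell}{\MinRankingInPf}\binom{\NcolInPf}{\ell}F^{\NcolInPf-\ell}(1-F)^{\ell}$, and then expand $(1-F)^{\ell}$ and re-index to collect the coefficient of each power of $F$. The index bookkeeping and both sanity checks ($\MinRankingInPf=\NcolInPf$ and $\MinRankingInPf=1$) are consistent with the paper's Corollary~\ref{cE1} and the special cases discussed after the lemma.
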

\begin{proof}
    See \Append{sPf4LemFyk}.
\end{proof}
\begin{cor}
    \label{cE1}
    When $\MinRankingInPf= \NcolInPf$ (\ie{} full feedback), $e_1(\NcolInPf,\NcolInPf,m)= 1$ for $m=\NcolInPf-1$, and $0$ otherwise.
\end{cor}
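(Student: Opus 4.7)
My plan is to substitute $\MinRankingInPf=\NcolInPf$ directly into the definition of $e_1$ given in \eqref{eE1abm}, simplify the weighted binomial factor, and then collapse the resulting alternating sum using a standard binomial identity. The derivation should be short since all the heavy lifting has already been done in the proof of \Lemma{lFyk}.

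First I would use the elementary identity $(\NcolInPf-\ell)\binom{\NcolInPf}{\ell}=\NcolInPf\binom{\NcolInPf-1}{\ell}$ to rewrite the weight $\frac{\NcolInPf-\ell}{\NcolInPf}\binom{\NcolInPf}{\ell}$ as $\binom{\NcolInPf-1}{\ell}$. This turns the definition into
\begin{equation*}
e_1(\NcolInPf,\NcolInPf,m)=\sum_{\ell=m}^{\NcolInPf-1}\binom{\NcolInPf-1}{\ell}\binom{\ell}{m}(-1)^{\ell-m}.
\end{equation*}
Next I would apply the trinomial revision identity $\binom{n}{\ell}\binom{\ell}{m}=\binom{n}{m}\binom{n-m}{\ell-m}$ with $n=\NcolInPf-1$, pull $\binom{\NcolInPf-1}{m}$ out of the sum, and change the index to $j=\ell-m$ to obtain
\begin{equation*}
e_1(\NcolInPf,\NcolInPf,m)=\binom{\NcolInPf-1}{m}\sum_{j=0}^{\NcolInPf-1-m}\binom{\NcolInPf-1-m}{j}(-1)^{j}=\binom{\NcolInPf-1}{m}(1-1)^{\NcolInPf-1-m}.
\end{equation*}

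The right-hand side is $0$ whenever $\NcolInPf-1-m>0$, i.e.\ for $m<\NcolInPf-1$, and is $\binom{\NcolInPf-1}{\NcolInPf-1}\cdot 1=1$ when $m=\NcolInPf-1$, which is exactly the claim. As a sanity check I would verify the $m=\NcolInPf-1$ case directly from \eqref{eE1abm}: the sum then contains the single term $\ell=\NcolInPf-1$, yielding $\frac{1}{\NcolInPf}\binom{\NcolInPf}{\NcolInPf-1}\binom{\NcolInPf-1}{\NcolInPf-1}(-1)^{0}=1$, and to confirm consistency that substituting this back into \eqref{eFy00} recovers $\Fyk(x)=\Fzk(x/\rho)^{\NcolInPf}$, which is the correct \CDF{} of the maximum of $\NcolInPf$ \iid{} samples under full feedback. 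There is no real obstacle here; the only subtlety is recognizing the weight-absorption step $(\NcolInPf-\ell)\binom{\NcolInPf}{\ell}/\NcolInPf=\binom{\NcolInPf-1}{\ell}$, which is what enables the alternating sum to telescope via the binomial theorem.
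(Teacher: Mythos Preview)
Your main argument is correct. Both you and the paper first reduce $e_1(\NcolInPf,\NcolInPf,m)$ to $\sum_{\ell=m}^{\NcolInPf-1}\binom{\NcolInPf-1}{\ell}\binom{\ell}{m}(-1)^{\ell-m}$ via the weight-absorption identity. To evaluate this sum, the paper differentiates the expansion of $(1-x)^{\NcolInPf-1}$ exactly $m$ times, divides by $m!$, and sets $x=1$; you instead apply trinomial revision $\binom{n}{\ell}\binom{\ell}{m}=\binom{n}{m}\binom{n-m}{\ell-m}$ and collapse the remaining alternating sum directly to $\binom{\NcolInPf-1}{m}(1-1)^{\NcolInPf-1-m}$. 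The two are closely related (both rest on the binomial theorem), but your route is purely combinatorial and avoids the differentiation step, which makes it a bit cleaner.

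One slip in your sanity check: substituting $e_1(\NcolInPf,\NcolInPf,m)=\delta_{m,\NcolInPf-1}$ into \eqref{eFy00} yields the exponent $\NcolInPf-(\NcolInPf-1)=1$, so you recover $\Fyk(x)=\Fzk(x/\rho)$, not $\Fzk(x/\rho)^{\NcolInPf}$. Under full feedback $\Ykr=\rho\Zkr$ has the same distribution as a single (scaled) $\Zkr$, not the maximum of $\NcolInPf$ samples; the paper makes exactly this remark after the corollary. This does not affect the proof itself.
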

\begin{proof}
    See \Append{sPf4CorrE1}.
\end{proof}

For example in best-$1$ feedback ($\NFb=1$), since $e_1(\Nrb,1,m)$ is only non-zero for $m=0$ and the value is $1$, we can verify that \eqref{eFy00} reduces to $\Fyk(x) = \{\Fzk(\tfrac{x}{\rho})\}^\Nrb$, which confirms $\Ykr= \rho \times \max_{1 \leq r' \leq \Nrb} Z_{k,r'}$ \cite[2.1.1]{bDavidJwas04}.
In full feedback ($\NFb = \Nrb$), since $e_1(\Nrb,\Nrb,m)= 1$ for $m= \Nrb-1$ and zero otherwise from \Corollary{cE1}, we can verify that \eqref{eFy00} reduces to $\Fyk(x) = \Fzk(\tfrac{x}{\rho})$, which confirms that $\Ykr= \rho \Zkr$.
That is, $\Ykr$ has the same statistics as $\SNR_{k,r}$ for full feedback.

\subsubsection{\Fstep-$3$, finding the conditional \CDF{} of $X_r$}
\label{sStepThree4Nq}
This step involves finding the distribution of the \SNR{} of the channel of the user selected in the scheduling step based on partial feedback.
Since a channel is assumed to be \iid{} across the resource blocks for each user, the probability that a user provides the transmitter with \CQI{} for block-$r$ is $\frac{\NFb}{\Nrb}$.
Let $S_r$ denote a set of users who provided \CQI{} to the transmitter for block-$r$.
Since the channel is independent across users, the number of users who provided \CQI{} at block-$r$, \ie{} $|S_r|$, follows the binomial distribution with the probability mass function \cite{bGarciaAw94}\OneColumn
{
    \begin{equation}
        \label{ePmfS}
        \Pr \{ |S_r| = n \} = \binom{\Nus}{n} \left( \frac{\NFb}{\Nrb} \right)^n \left( 1 - \frac{\NFb}{\Nrb} \right)^{\Nus-n}, \: 0 \leq n \leq \Nus.
    \end{equation}
}\TwoColumn
{
    for $0 \leq n \leq \Nus$ as
    \begin{equation}
        \label{ePmfS}
        p_n \triangleq \Pr \{ |S_r| = n \} = \binom{\Nus}{n} \left( \frac{\NFb}{\Nrb} \right)^n \left( 1 - \frac{\NFb}{\Nrb} \right)^{\Nus-n}.
    \end{equation}
}

For \Fstep-$3$ related to the {\it user selection} in \Table{tMainStepsForRsumAnal}, let $\Ukr = \frac{\Ykr}{\rho c_k}$, \ie{} normalized \CQI{} of user-$k$ in block-$r$ viewed at the transmitter.
Based on the scheduling policy, a user with the largest $\Ukr$ among users in $S_r$ is scheduled on block-$r$ by the transmitter.
In our assumption, $\Ykr$'s are independent but not identically distributed in $k$ due to the different average \SNR{} distribution (\ie{} different $c_k$) across users.
However, $\Ukr$'s are \iid{} in $k$ as well because they are normalized by their average \SNR{}, \ie{} $\rho c_k$.
Let $k_r^*$ denote a random variable representing a selected user for transmission on block-$r$ by the transmitter and $X_r$ be the \SNR{} of the selected user.
Since, in our model we do not utilize a block when $|S_r|=0$, we concentrate on the case $|S_r| \neq 0$.
Note that $|S_r|=0$ corresponds to a scheduling outage.
Then, it is shown in \Append{sDeriCdfX} that the conditional \CDF{} of $X_r$ is given by
    \begin{equation}
        \label{eFx00}
        \FxCondSr (x)= \left \{ \Fyk (x) \right \}^{n}.
    \end{equation}
Since $\Fyk(x) = \Fzk(\tfrac{x}{\rho})$ for full feedback ($\NFb= \Nrb$) and $\Fyk(x) = \{ \Fzk(\tfrac{x}{\rho}) \}^\Nrb$ for best-$1$ feedback ($\NFb= 1$), for these two special cases we have
    \begin{equation}
        \label{eTwoSpecialCdfs}
        \hspace{-0.4cm} \FxCondSr (x) =   \begin{cases}
                        \left [ \Fzk (\tfrac{x}{\rho}) \right ]^{n} \hspace{0.7cm} \textrm{: Full \FB}\\
                        \left [ \Fzk (\tfrac{x}{\rho}) \right]^{n \Nrb} \hspace{0.2cm} \textrm{: Best-$1$ \FB}
                    \end{cases}
    \end{equation}
with $\Fzk(x)$ given in \eqref{eFzkTas}.
For the general case, substituting $\Fyk(x)$ from \Lemma{lFyk} into \eqref{eFx00}, we have the following result.
\begin{lem}
    \label{lFxk}
    For $\Fzk(x)$ in \eqref{eFzkTas}, the conditional \CDF{} of $X_r$ in \eqref{eFx00} is given by
        \begin{equation}
            \label{eCdfX}
            \FxCondSr (x)= \sum_{m=0}^{n(\NFb-1)} e_2(\Nrb,\NFb,n,m) \big \{ \Fzk(\tfrac{x}{\rho}) \big \}^{n \Nrb - m}
        \end{equation}
    where $e_2(\NcolInPf,\MinRankingInPf,\NrowVal,m)$ is given by\OneColumn
{
        \begin{equation}
            \label{eE2}
            \hspace{-0.0cm} e_2(\NcolInPf,\MinRankingInPf,\NrowVal,m)= \begin{cases}
                \{ e_1(\NcolInPf,\MinRankingInPf,0) \}^\NrowVal, \quad m=0\\
                \frac{1}{m e_1(\NcolInPf,\MinRankingInPf,0)} \sum_{\ell=1}^{\min \{ m, \MinRankingInPf-1 \}} \{(\NrowVal+1) \ell - m\} \\ \hspace{0.3in} \times e_1(\NcolInPf,\MinRankingInPf,\ell) e_2(\NcolInPf,\MinRankingInPf,\NrowVal,m-\ell), \quad 1 \leq m < \NrowVal(\MinRankingInPf-1) \\
                \{ e_1(\NcolInPf,\MinRankingInPf,\MinRankingInPf-1) \}^\NrowVal, \quad m=\NrowVal(\MinRankingInPf-1).
            \end{cases}
        \end{equation}
}\TwoColumn
{
        \begin{equation}
            \label{eE2}
            \hspace{-0.0cm} \begin{cases}
                \{ e_1(\NcolInPf,\MinRankingInPf,0) \}^\NrowVal, \quad m=0\\
                \sum_{\ell=1}^{\min \{ m, \MinRankingInPf-1 \}} \frac{\{(\NrowVal+1) \ell - m\}}{m e_1(\NcolInPf,\MinRankingInPf,0)} \; e_1(\NcolInPf,\MinRankingInPf,\ell) \\ \hspace{0.3in} \times  e_2(\NcolInPf,\MinRankingInPf,\NrowVal,m-\ell), \quad 1 \leq m < \NrowVal(\MinRankingInPf-1) \\
                \{ e_1(\NcolInPf,\MinRankingInPf,\MinRankingInPf-1) \}^\NrowVal, \quad m=\NrowVal(\MinRankingInPf-1).
            \end{cases}
        \end{equation}
}
\end{lem}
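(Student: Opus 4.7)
The plan is to start from the identity $\FxCondSr(x) = \{\Fyk(x)\}^n$ established in \eqref{eFx00}, substitute the expression for $\Fyk$ obtained in Lemma~1, and then expand the $n$-th power of the resulting polynomial in $\Fzk(x/\rho)$. The key step will be to derive the stated recurrence by a generating-function / differentiation argument (J.\,C.\,P.~Miller's formula for powers of a polynomial), rather than by brute-force multinomial expansion.

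Concretely, introduce the shorthand $t = \Fzk(x/\rho)$ and $a_m = e_1(\Nrb,\NFb,m)$, so that by Lemma~1
\begin{equation*}
p(t) \;\triangleq\; \Fyk(x) \;=\; \sum_{m=0}^{\NFb-1} a_m\, t^{\Nrb - m},
\end{equation*}
a polynomial of degree $\Nrb$ with lowest nonzero term of degree $\Nrb-\NFb+1$. First I would observe that $q(t) \triangleq p(t)^n$ has degree $n\Nrb$ with lowest nonzero degree $n(\Nrb-\NFb+1)$, so that $q(t)$ can be written uniquely as $\sum_{m=0}^{n(\NFb-1)} b_m\, t^{n\Nrb - m}$, which matches the shape of \eqref{eCdfX}. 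It remains to identify $b_m$ with $e_2(\Nrb,\NFb,n,m)$.

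The heart of the argument is the differential identity $p(t)\, q'(t) = n\, p'(t)\, q(t)$, valid because $q = p^n$. Expanding both sides and equating the coefficient of $t^{(n+1)\Nrb - k - 1}$ gives, after the telescoping simplification $[n\Nrb - (k-m)] - n[\Nrb - m] = (n+1)m - k$, the relation
\begin{equation*}
\sum_{m=0}^{\min(k,\NFb-1)} \bigl[(n+1)m - k\bigr]\, a_m\, b_{k-m} \;=\; 0.
\end{equation*}
Separating the $m=0$ term (which contributes $-k\, a_0\, b_k$) and solving for $b_k$ yields exactly the middle case of \eqref{eE2}. The two boundary cases are immediate: $b_0 = a_0^n$ follows by picking the top-degree term $a_0 t^{\Nrb}$ from each of the $n$ factors of $p(t)^n$, and $b_{n(\NFb-1)} = a_{\NFb-1}^n$ follows symmetrically by picking the bottom-degree term $a_{\NFb-1} t^{\Nrb-\NFb+1}$ from each factor, since these are the only ways to attain the extreme degrees $n\Nrb$ and $n(\Nrb-\NFb+1)$, respectively.

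The main obstacle is simply the index bookkeeping: one must verify that the summation range $1\le \ell \le \min(m,\NFb-1)$ in the recurrence correctly accounts for the support of both $a_m$ (which vanishes outside $0\le m\le \NFb-1$) and $b_{k-m}$ (which vanishes outside $0\le k-m\le n(\NFb-1)$), and that $a_0 = e_1(\Nrb,\NFb,0)\ne 0$ so that division by $m\, a_0$ is legitimate for $1\le m < n(\NFb-1)$. Once these degree and non-vanishing checks are in place, substituting back $t=\Fzk(x/\rho)$ into $q(t)$ delivers \eqref{eCdfX} with $b_m = e_2(\Nrb,\NFb,n,m)$, completing the proof.
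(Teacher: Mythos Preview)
Your proposal is correct and takes essentially the same approach as the paper: both substitute Lemma~1 into \eqref{eFx00} and extract the coefficients of $\{\Fyk(x)\}^n$ via the standard Miller-type recurrence for powers of a polynomial (the paper cites Gradshteyn--Ryzhik~0.314 after factoring out $\{\Fzk(x/\rho)\}^{n\Nrb}$ and working in $1/\Fzk$, whereas you derive the same recurrence directly from the identity $p\,q' = n\,p'\,q$). Your write-up is in fact more self-contained than the paper's, which simply invokes the cited formula without spelling out the differentiation argument or the boundary cases.
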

\begin{proof}
    See \Append{sPf4LemFxk}.
\end{proof}

\subsubsection{\Fstep-$4$, finding the sum rate}
\label{sStepFour4Nq}
Now we use the derived \CDF{} to obtain the sum rate of the \OFDMA{} system.
Since blocks are identically distributed, the sum rate is $\Rsum = \frac{1}{\Nrb} \sum_{r=1}^{\Nrb} \Ex [ \log (1 + X_r ) ] = \Ex[\log(1+X_r)]$.
From the property of the conditional expectation \cite{bGarciaAw94}, we have\OneColumn
{
    \begin{equation}
        \label{eRsumDef}
        \hspace{-0.0cm} \Rsum = \Ex_{k_r^*} \; \Ex_{{}_{|S_r|}}  \big [ \; \Ex_{{}_{X_r}} \big [ \log(1+X_r) \; | \; |S_r|=0 \big] + \Ex_{{}_{X_r}} \big[ \log(1+X_r) \; | \; |S_r| \neq 0 \big ] \; \big ].
    \end{equation}
}\TwoColumn
{
    \begin{equation*}
        \hspace{-0.0cm} \Rsum = \Ex_{k_r^*} \; \Ex_{{}_{|S_r|}}  \big [ \; \Ex_{{}_{X_r}} \big [ \log(1+X_r) \; | \; |S_r|=0 \big]
    \end{equation*}
    \begin{equation}
        \label{eRsumDef}
        \hspace{-0.0cm} + \Ex_{{}_{X_r}} \big[ \log(1+X_r) \; | \; |S_r| \neq 0 \big ] \; \big ].
    \end{equation}
}
Since $X_r=0$ when $|S_r|=0$, the first term is zero and does not contribute to the sum rate.
Other variations on the scheduling when there is a scheduling outage, as mentioned in \Section{sSystemModel4RsumAnal}, can be readily incorporated into the first term.\OneColumn
{
Concentrating on the second term, the sum rate is further developed as follows:
    \begin{equation*}
         \hspace{-4.5cm} \Rsum = \Ex_{k_r^*} \; \Ex_{{}_{|S_r|}} \left[ \int_0^\infty \log(1+x) \; d \hspace{-0.1cm} \left \{ \FxCondSr (x) \right \} \; \big | \; |S_r|=n \neq 0 \right]
    \end{equation*}
    \begin{equation*}
         \hspace{-2.0cm} \stackrel{(a)}{=} \Ex_{k_r^*} \; \Ex_{{}_{|S_r|}} \bigg[ \sum_{m=0}^{n(\NFb-1)} e_2(\Nrb,\NFb,n,m) \int_0^\infty \log(1+x) \; d \hspace{-0.1cm} \left\{ \Fzk(\tfrac{x}{\rho})  \right\}^{n \Nrb - m}  \; \big | \; |S_r|=n \neq 0 \bigg]
    \end{equation*}
    \begin{equation}
        \label{eSumRateGenFbTas}
        \hspace{-0.1cm} \stackrel{(b)}{=} \tfrac{1}{\Nus} \sum_{k=1}^{\Nus} \sum_{n=1}^{\Nus} \tbinom{\Nus}{n} \left( \tfrac{\NFb}{\Nrb} \right)^n \left( 1 - \tfrac{\NFb}{\Nrb} \right)^{\Nus-n} \; \sum_{m=0}^{n(\NFb-1)} e_2(\Nrb,\NFb,n,m) \FnRsumNq (1,\tfrac{1}{\rho c_k},{(n \Nrb - m)\NTx}),
    \end{equation}
}\TwoColumn
{
Concentrating on the second term, the sum rate is further developed as \eqref{eSumRateGenFbTas} in the next page:
    \begin{figure*}[!t]
    \normalsize
    \begin{equation*}
         \hspace{-5.5cm} \Rsum = \Ex_{k_r^*} \; \Ex_{{}_{|S_r|}} \left[ \int_0^\infty \log(1+x) \; d \hspace{-0.1cm} \left \{ \FxCondSr (x) \right \} \; \big | \; |S_r|=n \neq 0 \right]
    \end{equation*}
    \begin{equation*}
         \hspace{-1.0cm} \stackrel{(a)}{=} \Ex_{k_r^*} \; \Ex_{{}_{|S_r|}} \bigg[ \sum_{m=0}^{n(\NFb-1)} e_2(\Nrb,\NFb,n,m) \int_0^\infty \log(1+x) \; d \hspace{-0.1cm} \left\{ \Fzk(\tfrac{x}{\rho})  \right\}^{n \Nrb - m}  \; \big | \; |S_r|=n \neq 0 \bigg]
    \end{equation*}
    \begin{equation}
        \label{eSumRateGenFbTas}
        \hspace{-0.1cm} \stackrel{(b)}{=} \tfrac{1}{\Nus} \sum_{k=1}^{\Nus} \sum_{n=1}^{\Nus} \tbinom{\Nus}{n} \left( \tfrac{\NFb}{\Nrb} \right)^n \left( 1 - \tfrac{\NFb}{\Nrb} \right)^{\Nus-n} \; \sum_{m=0}^{n(\NFb-1)} e_2(\Nrb,\NFb,n,m) \FnRsumNq (1,\tfrac{1}{\rho c_k},{(n \Nrb - m)\NTx}),
    \end{equation}
    \end{figure*}
}
where $(a)$ follows from the conditional \CDF{} of $X_r$ in \eqref{eCdfX}; $(b)$ follows from the fact that the \PMF{} $\Pr\{ k_r^* = k \} = \frac{1}{\Nus}$, because $\Ukr$ for user selection is \iid{} in $k$, and $\Pr\{ |S_r| = n \} $ is given by \eqref{ePmfS}, and that we have the following integration identity for the \CDF{} $\Fz(x)$ with the form given in \eqref{eGammaCdf}  \cite{bChenTcom06}
    \begin{equation}
        \label{eSumRateInt}
        \int_0^\infty \log (1+x) \; d \hspace{-0.1cm} \left \{ \Fz (x) \right \}^{n}  = \FnRsumNq (\alpha,\beta,n).
    \end{equation}
It is shown in \Append{sDeriC1} that $\FnRsumNq (x,y,z)$ is given by\OneColumn
{
    \begin{equation}
        \label{eC1Ori}
        \tfrac{z}{(x-1)! \ln 2} \sum_{k=0}^{z-1} (-1)^{k} \tbinom{z-1}{k} \sum_{i=0}^{k(x-1)} b_{k,i} \; \tfrac{(x+i-1)!}{(k+1)^{x+i}} \sum_{\ell=0}^{x+i-1} \{(k+1)y\}^{\ell} \; \Gamma(-\ell,(k+1)y) e^{(k+1)y}
    \end{equation}
}\TwoColumn
{
    \begin{equation*}
        \hspace{-2.0cm} \tfrac{z}{(x-1)! \ln 2} \sum_{k=0}^{z-1} (-1)^{k} \tbinom{z-1}{k} \sum_{i=0}^{k(x-1)} b_{k,i} \; \tfrac{(x+i-1)!}{(k+1)^{x+i}}
    \end{equation*}
    \begin{equation}
        \label{eC1Ori}
        \times \sum_{\ell=0}^{x+i-1} \{(k+1)y\}^{\ell} \; \Gamma(-\ell,(k+1)y) e^{(k+1)y}
    \end{equation}
}
where $\Gamma(a,x) = \int_x^{\infty} t^{a-1} e^{-t} dt$ is the incomplete Gamma function \cite[8.350.2]{bGradshteynAp00} and\OneColumn
{
    \begin{equation}
        \label{ebki}
        b_{k,i} = \begin{cases}
            1, \hspace{0.2cm} i=0\\
            \frac{1}{i} \sum_{n=1}^{\min \{ i, x-1 \}} \frac{n(k+1) - i}{n!} \; b_{k,i-n}, \hspace{0.2cm} 1 \leq i < k(x-1) \\
            \frac{1}{[(x-1)!]^k}, \hspace{0.2cm} i=k(x-1)
        \end{cases}.
    \end{equation}
}\TwoColumn
{
    \begin{equation}
        \label{ebki}
        b_{k,i} = \begin{cases}
            1, \hspace{0.2cm} i=0\\
            \frac{1}{i} \sum_{n=1}^{\min \{ i, x-1 \}} \frac{n(k+1) - i}{n!} \; b_{k,i-n}, \hspace{0.2cm} 1 \leq i < k(x-1) \\
            \frac{1}{[(x-1)!]^k}, \hspace{0.2cm} i=k(x-1).
        \end{cases}
    \end{equation}
}
When $x=1$, $\FnRsumNq (x,y,z)$ is further reduced to \cite{bChenTcom06,bJorswieckEurasip09}
    \begin{equation}
        \label{eC1Red}
        \FnRsumNq (1,y,z)= \tfrac{1}{\ln 2} \sum_{k=1}^{z} (-1)^{k-1} \tbinom{z}{k} \Gamma(0,ky) e^{ky}.
    \end{equation}
We note that the conditional \CDF{} of $X_r$ in \eqref{eCdfX} is amenable to the integration since it is represented in terms of a polynomial in $\Fzk(x)$ and we have the integration result in \eqref{eSumRateInt}.
Although we can represent the incomplete Gamma function in \eqref{eC1Ori} using a finite summation as in \cite{bChenTcom06} and \cite{bAbramowitzUs70}, \ie{} $\Gamma(-\ell,(k+1)y) = \frac{(-1)^\ell}{\ell!} \big[ \Gamma(0,(k+1)y) - e^{-(k+1)y} \sum_{m=0}^{\ell-1} \frac{(-1)^m m !}{\{(k+1)y\}^{m+1}} \big]$, we note that the form in \eqref{eC1Ori} is much more appropriate for easy, fast and precise evaluation especially for large $z$, which is related to $\Nrb$, $\Nus$, and $\NTx$.

The expression can be simplified to obtain the sum rate for the special cases of best-$1$ and full feedback.\OneColumn
{
    \begin{equation}
        \label{eTwoSpecialSumRateTas}
        \hspace{-0.0cm} \Rsum =   \begin{cases}
                        \frac{1}{\Nus} \sum_{k=1}^{\Nus} \FnRsumNq (1,\frac{1}{\rho c_k},{ \Nus \NTx}) \hspace{6.3cm} \textrm{: Full \FB}\\
                        \frac{1}{\Nus} \sum_{k=1}^{\Nus} \sum_{n=1}^{\Nus} \binom{\Nus}{n} \left( \frac{1}{\Nrb} \right)^n \left( 1 - \frac{1}{\Nrb} \right)^{\Nus-n} \FnRsumNq (1,\frac{1}{\rho c_k},{n \Nrb \NTx}) \hspace{0.2cm} \textrm{: Best-$1$ \FB}
                    \end{cases}.
    \end{equation}
}\TwoColumn
{
    \begin{equation}
        \label{eTwoSpecialSumRateTas}
        \hspace{-0.0cm} \Rsum =   \begin{cases}
                        \frac{1}{\Nus} \sum_{k=1}^{\Nus} \FnRsumNq (1,\frac{1}{\rho c_k},{ \Nus \NTx}) \hspace{1.5cm} \textrm{: Full \FB}\\
                        \frac{1}{\Nus} \sum_{k=1}^{\Nus} \sum_{n=1}^{\Nus} \binom{\Nus}{n} \left( \frac{1}{\Nrb} \right)^n \left( 1 - \frac{1}{\Nrb} \right)^{\Nus-n} \\
                        \hspace{1.0cm} \times \FnRsumNq (1,\frac{1}{\rho c_k},{n \Nrb \NTx}) \hspace{1.5cm} \textrm{: Best-$1$ \FB}.
                    \end{cases}
    \end{equation}
}

\subsection{Sum rate analysis for quantized \CQI}
\label{sRsumQ4Tas}
In this subsection, we provide the sum rate for the partial feedback \TAS{}-system with quantized \CQI.

\subsubsection{Feedback procedure and scheduling for the quantized system}
\label{sOverallProc4Q}
For quantization purposes, it is useful to work with normalized \CQI{} defined as $\Wkr = \tfrac{\Zkr}{c_k}$.
Each user computes $\Wkr$ for all the resource blocks and finds the best-$\NFb$ $\Wkr$'s.
Then, each user quantizes the selected $\Wkr$ values using a quantization policy depicted in \Fig{fQReg}.
In the figure, $J_\ell$ for $0 \leq \ell \leq L$ denotes the quantization region index and $\xi_\ell$ denotes the boundary value between regions.
More specifically, quantization is done as follows:
    \begin{equation}
        \label{eQuanPolicy}
        q_{k,r} = Q(\Wkr) = J_\ell, \hspace{0.4cm} \textrm{if} \hspace{0.4cm} \xi_\ell \leq \Wkr < \xi_{\ell+1}.
    \end{equation}
Then, each user feeds back the quantized region indices $q_{k,r}$'s for the selected best-$\NFb$ blocks to the transmitter together with the corresponding resource block indices.
To exploit multiuser diversity as in the non-quantized \CQI{} case, we assume for the scheduling policy that the transmitter schedules a transmission for each block to a user with the largest quantization region index.
When multiple users provide the same quantization index, the transmitter randomly selects a user.\OneColumn
{
    \begin{figure}
        \centering
            \includegraphics[width=3.0in]{\PfSchFigDir/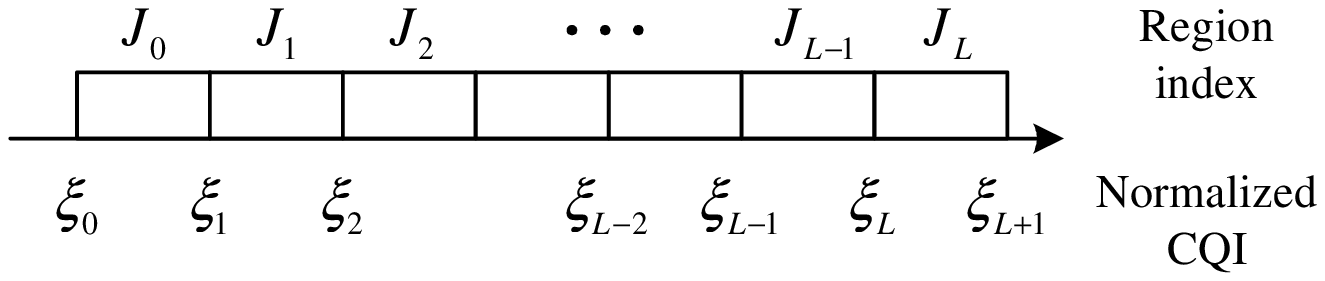}
            \caption{Quantization region for normalized \CQI. $(\xi_0 = 0, \; \xi_{L+1}= \infty)$}
            \label{fQReg}
    \end{figure}
}\TwoColumn
{
    \begin{figure}
        \centering
            \includegraphics[width=7.5cm]{\PfSchFigDir/QuantReg.eps}
            \caption{Quantization region for normalized \CQI. $(\xi_0 = 0, \; \xi_{L+1}= \infty)$}
            \label{fQReg}
    \end{figure}
}

\subsubsection{\Fstep-$1$, finding $\Fw(x)$}
\label{sStepOne4Q}
The step is related to determining the distribution of normalized \CQI{}.
Since normalized \CQI{} is $\Wkr = \tfrac{\Zkr}{c_k}$ and $\Pr\{ \Wkr \leq x \} = \Pr\{ \Zkr \leq {c_k} x \} $, the \CDF{} of $\Wkr$ with the \TAS{} diversity scheme is given from \eqref{eFzkTas} by
    \begin{equation}
        \label{eFw4TasQ}
        \Fw(x) = \Fzk (c_k x) = \{ \GammaTilde(1,x) \}^\NTx.
    \end{equation}

\subsubsection{\Fstep-$2$, finding $\Fu(x)$}
\label{sStepTwo4Q}
The step is related to the feedback policy and involves determining the order statistics for normalized \CQI{}, $W_{k,(1)} \leq \cdots \leq W_{k,(\Nrb)}$, quantizing $W_{k,(\ell)}$ for $\Nrb - \NFb + 1 \leq \ell \leq \Nrb$, and sending back the corresponding quantized region indices together with block indices.
Defining $\Ukr = \frac{\Ykr}{\rho c_k}$, from \Section{sStepThree4Nq} it denotes normalized \CQI{} as seen by the transmitter.
Since $\Pr\{ \Ukr \leq x \} = \Pr\{ \Ykr \leq {\rho c_k} x \}$, the \CDF{} of $\Ukr$ in \TAS{} is given by\OneColumn
{
    \begin{equation}
        \label{eUk4Tas}
            \Fu(x)=  \Fyk (\rho c_k x)  \stackrel{(a)}{=} \sum_{m=0}^{\NFb-1} e_1(\Nrb,\NFb,m) \{ \GammaTilde(1,x) \}^{(\Nrb-m) \NTx}
    \end{equation}
}\TwoColumn
{
    \begin{equation}
        \label{eUk4Tas}
            \Fu(x)=  \Fyk (\rho c_k x)  \hspace{-0.1cm} \stackrel{(a)}{=} \hspace{-0.1cm} \sum_{m=0}^{\NFb-1} \hspace{-0.2cm} e_1(\Nrb,\NFb,m) \{ \GammaTilde(1,x) \}^{(\Nrb-m) \NTx}
    \end{equation}
}
where $(a)$ follows from \eqref{eFy00} and \eqref{eFw4TasQ}, and $e_1(\Nrb,\NFb,m)$ is given in \eqref{eE1abm}.\footnote{$\Fu(x) \triangleq F_{{}_{\Ukr}}(x)$ for notational simplicity since $\Ukr$'s are \iid{} in $k$ and $r$.}
For the two special cases, we have $\Fu (x)= \{ \GammaTilde(1,x) \}^\NTx$ for full feedback $(\NFb=\Nrb)$ from \Corollary{cE1} and $\Fu (x)= \{ \GammaTilde(1,x) \}^{\Nrb \NTx}$ for best-$1$ feedback $(\NFb=1)$.

Let $\UkrQuan$ denote the quantization index received at the transmitter through feedback, which is equivalent to quantizing $\Ukr$ based on the policy in \eqref{eQuanPolicy}.
The distribution of $\UkrQuan$ can be readily determined from the distribution of $\Ukr$ given above.
It is shown in \Append{sStatOfUkrQuan} that $\UkrQuan$ is \iid{} in $k$ and $r$.
Then, a user with the largest $\UkrQuan$ is selected for block-$r$ by the transmitter in the next step.

\subsubsection{\Fstep-$3$, finding the conditional \PMF{} of $\XrQuan$}
\label{sStepThree4Q}
Let $\XrQuan$ denote the \SNR{} of a user selected for a transmission in block-$r$.
Suppose that $n$ users provided the quantization index for block-$r$, \ie{} $|S_r|=n$ recalling that $S_r$ denotes the set of those users.
We note that the probability for each user to be selected is equal since $\UkrQuan${}'s are \iid{} across users.
For the selected quantization index to be $J_\ell$, no one should provide a larger quantization index than $J_\ell$ (\ie{} $\UkrQuan \leq J_\ell$) and at least one user should provide the quantization index equal to $J_\ell$.\OneColumn
{
Thus, it is shown in \Append{sDerivePmf} that the conditional \PMF{} of $\XrQuan$ is given by
    \begin{equation}
        \label{eCondPmf}
        \hspace{-2.0cm} \Pr\{ \XrQuan = \rho c_k \xi_\ell \hspace{0.1cm} | \hspace{0.1cm} |S_r|= n \} = \tfrac{1}{\Nus} \left[ \left\{ \Fu(\xi_{\ell+1}) \right\}^n -  \left\{\Fu(\xi_\ell) \right\}^n \right], \hspace{0.2cm} 1 \leq k \leq \Nus, \hspace{0.2cm} 0 \leq \ell \leq L.
    \end{equation}
}\TwoColumn
{
Thus, for $1 \leq k \leq \Nus$ and $0 \leq \ell \leq L$, it is shown in \Append{sDerivePmf} that the conditional \PMF{} of $\XrQuan$ is given by
    \begin{equation}
        \label{eCondPmf}
        \hspace{-0.0cm} \Pr\{ \XrQuan = \rho c_k \xi_\ell \hspace{0.0cm} \big| \hspace{0.0cm} |S_r|= n \} = \tfrac{1}{\Nus} \left[ \left\{ \Fu(\xi_{\ell+1}) \right\}^n -  \left\{\Fu(\xi_\ell) \right\}^n \right].
    \end{equation}
}

\subsubsection{\Fstep-$4$, finding the sum rate}
\label{sStepFour4Q}
To calculate the sum rate, we assume that the modulation level for the transmission to the selected user-$k$ is assumed to be determined as $\log( 1 + \rho c_k \xi_\ell)$ so as to prevent an outage of the link when user-$k$ with a quantization level $J_\ell$ is selected.
It is shown in \Append{sDeriveRsumQ} that the sum rate is given by\OneColumn
{
    \begin{equation}
        \label{eRsumQ}
        \hspace{-0.5cm} \Rsum = \Ex[ \log(1+\XrQuan)] = \sum_{k=1}^\Nus \sum_{\ell=1}^L \frac{\log_2 (1 + \rho c_k \xi_\ell)}{\Nus} \times \FnRsumQ \left(\Fu(\xi_{\ell}),\Fu(\xi_{\ell+1}),\Nus,\frac{\NFb}{\Nrb} \right),
    \end{equation}
}\TwoColumn
{
    \begin{equation*}
        \hspace{-0.5cm} \Rsum = \Ex[ \log(1+\XrQuan)] = \sum_{k=1}^\Nus \sum_{\ell=1}^L \frac{\log_2 (1 + \rho c_k \xi_\ell)}{\Nus}
    \end{equation*}
    \begin{equation}
        \label{eRsumQ}
        \hspace{2.5cm} \times \FnRsumQ \left(\Fu(\xi_{\ell}),\Fu(\xi_{\ell+1}),\Nus,\frac{\NFb}{\Nrb} \right),
    \end{equation}
}
where $\FnRsumQ (x,y,z,r)$ is given by\OneColumn
{
    \begin{equation}
        \label{eC3}
        \hspace{-2.5cm} \FnRsumQ (x,y,z,r)= \left\{ 1 - r \left( 1 - y \right) \right \}^z - \left\{ 1 - r \left( 1 - x \right) \right \}^z.
    \end{equation}
}\TwoColumn
{
    \begin{equation}
        \label{eC3}
        \hspace{-0.0cm} \FnRsumQ (x,y,z,r)= \left\{ 1 - r \left( 1 - y \right) \right \}^z - \left\{ 1 - r \left( 1 - x \right) \right \}^z.
    \end{equation}
}
For full feedback as a special case, we have\OneColumn
{
    \begin{equation}
        \label{eRsumQFullFb}
        \hspace{-1.7cm} \Rsum= \tfrac{1}{\Nus} \sum_{k=1}^{\Nus} \sum_{\ell=0}^{L} \log(1+ \rho c_k \xi_\ell) \left[ \left\{ \Fw(\xi_{\ell+1}) \right\}^\Nus -  \left\{\Fw(\xi_\ell) \right\}^\Nus \right].
    \end{equation}
}\TwoColumn
{
    \begin{equation*}
        \hspace{-3.0cm} \Rsum= \tfrac{1}{\Nus} \sum_{k=1}^{\Nus} \sum_{\ell=0}^{L} \log(1+ \rho c_k \xi_\ell) \end{equation*}
    \begin{equation}
        \label{eRsumQFullFb}
        \hspace{-0.0cm} \times \left[ \left\{ \Fw(\xi_{\ell+1}) \right\}^\Nus -  \left\{\Fw(\xi_\ell) \right\}^\Nus \right].
    \end{equation}
}

\section{Sum rate analysis with application to \OSTBC{} and \CDD{}}
\label{sPerformanceAnalysis4OstbcCdd}
Since the diversity technique affects the distribution of $\Zkr$ or $\Wkr$ in \Fstep-$1$, we focus in this section on deriving $\Fzk$ and $\Fw$ for \OSTBC{} and \CDD{}.
\Fstep-$2$ and \Fstep-$3$ from the TAS analysis can be adopted with no change.
Then, we can obtain the sum rate by carrying out \Fstep-$4$.

\subsection{Sum rate for the orthogonal space time block codes (\OSTBC) scheme}
\subsubsection{Sum rate for non-quantized \CQI{} feedback}
\label{sSnrDistribution4OstbcCdd}
For the equal power transmission from each antenna in \OSTBC{}, effective \CQI{} of user-$k$ at block-$r$ is given by the square of the $2$-norm of a channel vector from the transmit antennas normalized by the number of transmit antennas \cite{bAlamouti98,bJorswieckEurasip09}, \ie{}
    \begin{equation}
        \label{eCqiOstbc}
        \hspace{-0.5cm} \Zkr= |H_{k,r}|^2 = \frac{1}{\NTx} \sum_{i=1}^{\NTx} |H_{k,r,i}|^2.
    \end{equation}
Since we assume that $H_{k,r,i}$ follows $\CN(0,c_k)$, $|H_{k,r,i}|^2$ follows the Gamma distribution $\Gcal(1,\frac{1}{c_k})$ \cite[(17.6)]{bJohnsonWiley94}.
The sum of $n$ \iid{} random variables with $\Gcal(\alpha, \beta)$ follows the Gamma distribution $\Gcal(n \alpha, \beta)$ \cite[2-1-110]{bProakisMgh95} and a Gamma distributed random variable with $\Gcal(\alpha,\beta)$ multiplied by a constant $c$ follows the distribution of $\Gcal(\alpha, \frac{\beta}{c})$.\footnote{For $Y = c X$ where $X$ follows $\Gcal(\alpha,\beta)$, since $\Fx(x) = \GammaTilde(\alpha,\beta x)$ from \eqref{eGammaCdf}, $\Fy(x) = \Pr\{ c X \leq x \} = \Pr\{ X \leq \tfrac{x}{c} \} = \Fx(\tfrac{x}{c}) = \GammaTilde(\alpha,\tfrac{\beta x}{c})$, which means that $Y$ follows $\Gcal(\alpha,\tfrac{\beta}{c})$.}
Therefore, \CQI{} $\Zkr$ in \eqref{eCqiOstbc} follows the Gamma distribution with $\Gcal(\NTx, \frac{\NTx}{c_k})$.
Thus, the \CDF{} of $\Zkr$ for \Fstep-$1$ is given from \eqref{eGammaCdf} by
    \begin{equation}
        \label{eFzkOstbc}
        \hspace{-1.0cm} \Fzk(x) = \GammaTilde \big( \NTx,\tfrac{\NTx x}{c_k} \big).
    \end{equation}

Since the feedback policy and the scheduling policy in \OSTBC{} are the same as in \TAS{}, we can follow the same next two steps, specifically \Fstep-$2$ in \Section{sStepTwo4Nq} and \Fstep-$3$ in \Section{sStepThree4Nq}.
Then, we obtain the conditional \CDF{} of $X_r$, the \SNR{} of a selected user in block-$r$, which is given for the general case in \eqref{eCdfX} and for two special cases in \eqref{eTwoSpecialCdfs} where $\Fzk(x)$ is to be replaced by \eqref{eFzkOstbc}.

We can carry out \Fstep-$4$ by again exploiting the fact that the conditional \CDF{} in \eqref{eCdfX} is represented in terms of a polynomial in $\Fzk(x)$ in \eqref{eFzkOstbc} and using the integration identity in \eqref{eSumRateInt}.
The sum rate $\Ex[\log(1+X_r)]$ of \OSTBC{} for the general case of $\NFb$ can be shown to be given by\OneColumn
{
    \begin{equation}
        \label{eSumRateGenFbOstbc}
        \hspace{-0.0cm} \Rsum = \tfrac{1}{\Nus} \sum_{k=1}^{\Nus} \sum_{n=1}^{\Nus} \tbinom{\Nus}{n} \left( \tfrac{\NFb}{\Nrb} \right)^n \left( 1 - \tfrac{\NFb}{\Nrb} \right)^{\Nus-n} \; \sum_{m=0}^{n(\NFb-1)} e_2(\Nrb,\NFb,n,m) \FnRsumNq (\NTx,\tfrac{\NTx}{\rho c_k},{n \Nrb - m}).
    \end{equation}
}\TwoColumn
{
    \begin{equation*}
        \hspace{-1.0cm} \Rsum = \tfrac{1}{\Nus} \sum_{k=1}^{\Nus} \sum_{n=1}^{\Nus} \tbinom{\Nus}{n} \left( \tfrac{\NFb}{\Nrb} \right)^n \left( 1 - \tfrac{\NFb}{\Nrb} \right)^{\Nus-n}
    \end{equation*}
    \begin{equation}
        \label{eSumRateGenFbOstbc}
        \hspace{0.0cm} \times \sum_{m=0}^{n(\NFb-1)} e_2(\Nrb,\NFb,n,m) \FnRsumNq (\NTx,\tfrac{\NTx}{\rho c_k},{n \Nrb - m}).
    \end{equation}
}
From \eqref{eTwoSpecialCdfs} and \eqref{eSumRateInt}, we have the sum rate for two special cases (\ie{} $\NFb= \Nrb$ and $\NFb=1$) as\OneColumn
{
    \begin{equation}
        \label{eTwoSpecialSumRateOstbc}
        \hspace{-0.0cm} \Rsum =   \begin{cases}
                        \frac{1}{\Nus} \sum_{k=1}^{\Nus} \FnRsumNq (\NTx,\frac{\NTx}{\rho c_k},{\Nus}) \hspace{6.3cm} \textrm{: Full \FB}\\
                        \frac{1}{\Nus} \sum_{k=1}^{\Nus} \sum_{n=1}^{\Nus} \binom{\Nus}{n} \left( \frac{1}{\Nrb} \right)^n \left( 1 - \frac{1}{\Nrb} \right)^{\Nus-n} \FnRsumNq (\NTx,\frac{\NTx}{\rho c_k},{n \Nrb}) \hspace{0.2cm} \textrm{: Best-$1$ \FB}
                    \end{cases}.
    \end{equation}
}\TwoColumn
{
    \begin{equation}
        \label{eTwoSpecialSumRateOstbc}
        \hspace{-0.0cm} \Rsum =   \begin{cases}
                        \frac{1}{\Nus} \sum_{k=1}^{\Nus} \FnRsumNq (\NTx,\frac{\NTx}{\rho c_k},{\Nus}) \hspace{2.0cm} \textrm{: Full \FB}\\
                        \frac{1}{\Nus} \sum_{k=1}^{\Nus} \sum_{n=1}^{\Nus} \binom{\Nus}{n} \left( \frac{1}{\Nrb} \right)^n \left( 1 - \frac{1}{\Nrb} \right)^{\Nus-n} \\
                        \hspace{1.0cm} \times \FnRsumNq (\NTx,\frac{\NTx}{\rho c_k},{n \Nrb}) \hspace{1.5cm} \textrm{: Best-$1$ \FB}.
                    \end{cases}
    \end{equation}
}

Since the maximum code rate for complex \OSTBC{} is 1 only for $\NTx=2$ and less than $1$ otherwise \cite{bJafarkhaniAp05}, we note that the exact sum rate can be obtained by multiplying the code rate, \ie{} multiplying $\frac{3}{4}$ for $\NTx=3$ and $4$.

\subsubsection{Sum rate for quantized \CQI{} feedback}
\label{sRsumQ4Ostbc}
We consider the same policy for quantization, feedback, and scheduling as that in \Section{sOverallProc4Q}.
Since normalized \CQI{} is $\Wkr = \tfrac{\Zkr}{c_k}$ and $\Pr\{ \Wkr \leq x \} = \Pr\{ \Zkr \leq {c_k} x \} $, the \CDF{} of $\Wkr$ in \OSTBC{} for \Fstep-$1$ is given from \eqref{eFzkOstbc} by
    \begin{equation}
        \label{eFw4OstbcQ}
        \Fw(x) = \Fzk (c_k x) = \GammaTilde(\NTx,\NTx x).
    \end{equation}
Normalized \CQI{} viewed at the transmitter for user-$k$ at block-$r$ is $\Ukr = \frac{\Ykr}{\rho c_k}$.
As in \Fstep-$2$ in \Section{sStepTwo4Q}, the \CDF{} of $\Ukr$ for \OSTBC{} is given by\OneColumn
{
    \begin{equation}
        \label{eUk4Ostbc}
            \Fu(x)= \Fyk (\rho c_k x)  \stackrel{(a)}{=} \sum_{m=0}^{\NFb-1} e_1(\Nrb,\NFb,m) \{ \GammaTilde(\NTx,\NTx x) \}^{\Nrb-m}
    \end{equation}
}\TwoColumn
{
    \begin{equation}
        \label{eUk4Ostbc}
            \Fu(x)= \Fyk (\rho c_k x) \hspace{-0.1cm} \stackrel{(a)}{=} \hspace{-0.1cm} \sum_{m=0}^{\NFb-1} \hspace{-0.2cm} e_1(\Nrb,\NFb,m) \{ \GammaTilde(\NTx,\NTx x) \}^{\Nrb-m}
    \end{equation}
}
where $(a)$ follows from \eqref{eFy00} and \eqref{eFw4OstbcQ}, and $e_1(\Nrb,\NFb,m)$ is given in \eqref{eE1abm}.
For the two special cases, we have $\Fu (x) = \GammaTilde(\NTx,\NTx x)$ for full feedback $(\NFb=\Nrb)$, and $\{ \GammaTilde(\NTx,\NTx x) \}^\Nrb $ for best-$1$ feedback $(\NFb=1)$.
Since the conditional \PMF{} of the \SNR{} for a selected user for \Fstep-$3$ is the same as \eqref{eCondPmf}, the sum rate of \OSTBC{} has the same form as \eqref{eRsumQ} where $\Fu(x)$ in \eqref{eUk4Ostbc} is to be substituted.

\subsection{Sum rate for the cyclic delay diversity (\CDD)}
\subsubsection{Sum rate for non-quantized \CQI{} feedback}
As in \OSTBC{} and \TAS{}, we derive the sum  rate for \CDD{} by first obtaining the \CDF{} of $\Zkr$ for \Fstep-$1$ and then using the same remaining 3-steps of the framework in \Table{tMainStepsForRsumAnal}.
For equal power transmission from each antenna, the equivalent channel of \CDD{} with cyclic delay $D_i$ at each transmit antenna is a dot product of a channel vector and complex phases determined by the cyclic delays \cite{bDammannGcom01}, \ie{} $H_{k,r}= \frac{1}{\sqrt{\NTx}} \sum_{i=1}^{\NTx} H_{k,r,i} e^{j \frac{2 \pi}{N} D_i}$.
The resulting channel follows $\CN(0,c_k)$ since $H_{k,r}$ is a linear combination of complex Gaussian random variables \cite{bGarciaAw94}.
Thus, \CQI{} for the equivalent channel of user-$k$ at block-$r$ is given by
    \begin{equation}
        \label{eCqiCdd}
        \hspace{-0.5cm} \Zkr= |H_{k,r}|^2 = \frac{1}{\NTx} \Big| \sum_{i=1}^{\NTx} H_{k,r,i} e^{j \frac{2 \pi}{N} D_i} \Big|^2,
    \end{equation}
which follows the Gamma distribution with $\Gcal \big( 1, \tfrac{1}{c_k} \big) $ \cite[(17.6)]{bJohnsonWiley94}.
From \eqref{eGammaCdf}, the \CDF{} of $\Zkr$ for \Fstep-$1$ is given by
    \begin{equation}
        \label{eFzkCdd}
        \hspace{-0.0cm} \Fzk(x) = \GammaTilde \left ( 1,\tfrac{x}{c_k} \right).
    \end{equation}
We can see that $\Fzk(x)$ in \eqref{eFzkCdd} for \CDD{} is the same as that in \eqref{eFzkTas} for \TAS{} and in \eqref{eFzkOstbc} for \OSTBC{} where $\NTx=1$.
Thus, the sum rate of \CDD{} is exactly the same as that in \eqref{eSumRateGenFbTas} and \eqref{eTwoSpecialSumRateTas} for \TAS{} and in \eqref{eSumRateGenFbOstbc} and \eqref{eTwoSpecialSumRateOstbc} for \OSTBC{} where $\NTx=1$.

We note in \cite{bDammannGcom01,bViswanathTit02} that \CDD{} or opportunistic beamforming is a technique to enhance the frequency diversity in a given channel by multiplying a gain to the channel randomly but in a controlled manner.
We also note that the diversity gain increases with the number of the transmit antennas.
However, since blocks are assumed to be already independent in our channel model, \CDD{} does not have a room to increase frequency diversity even though we increase the number of the transmit antennas.
Thus, we verify that the distribution of \CQI{} of \CDD{} in \eqref{eFzkCdd} does not depend on $\NTx$.

\subsubsection{Sum rate for quantized \CQI{} feedback}
Since normalized \CQI{} is $\Wkr = \tfrac{\Zkr}{c_k}$ and $\Pr\{ \Wkr \leq x \} = \Pr\{ \Zkr \leq {c_k} x \} $, the \CDF{} of $\Wkr$ in \CDD{} for \Fstep-$1$ is given from \eqref{eFzkCdd} by
    \begin{equation}
        \label{eFw4CddQ}
        \Fw(x) = \Fzk (c_k x) = \GammaTilde(1,x).
    \end{equation}
Normalized \CQI{} viewed at the transmitter for user-$k$ at block-$r$ is $\Ukr = \frac{\Ykr}{\rho c_k}$.
Through the same step as \Fstep-$2$ in \Section{sStepTwo4Q}, the \CDF{} of $\Ukr$ for \Fstep-$2$ is given by
    \begin{equation}
        \label{eUk4Cdd}
            \Fu(x)= \Fyk (\rho c_k x)  \stackrel{(a)}{=} \sum_{m=0}^{\NFb-1} e_1(\Nrb,\NFb,m) \{ \GammaTilde(1,x) \}^{\Nrb-m}
    \end{equation}
where $(a)$ follows from \eqref{eFy00} and \eqref{eFw4CddQ}, and $e_1(\Nrb,\NFb,m)$ is given in \eqref{eE1abm}.
Since the conditional \PMF{} of the \SNR{} for a selected user for \Fstep-$3$ is the same as \eqref{eCondPmf}, the sum rate of \CDD{} is given by \eqref{eRsumQ} with $\Fu(x)$ in \eqref{eUk4Cdd}.
We can verify that the sum rate of \CDD{} does not depend on $\NTx$ since blocks are assumed to be independent.

\section{Relation between Probability of normal scheduling and the sum rate ratio}
\label{sRelation}
In this section, we investigate the problem of minimizing the amount of feedback in the system by examining how much feedback is required to maintain the sum rate comparable to the sum rate obtained by a full feedback scheme.
Let $\RFb = \frac{\NFb}{\Nrb}$ denote the feedback ratio, \ie{} the ratio of the number of feedback blocks to the total number of blocks.
The design objective is to find the minimum feedback ratio while the achieved sum rate is above a certain fraction of the sum rate obtained by a full feedback scheme, \ie{}\OneColumn
{
    \begin{equation}
        \label{eReqRfb}
        \textrm{Find the minimum } \RFb, \hspace{0.2cm} \st{} \hspace{0.2cm} \RsumRatio = \frac{\Rsum \textrm{ by partial feedback}}{\Rsum \textrm{ by full feedback}} \geq \eta.
    \end{equation}
}\TwoColumn
{
    \begin{equation}
        \label{eReqRfb}
        \textrm{Find min. } \RFb, \hspace{0.1cm} \st{} \hspace{0.2cm} \RsumRatio = \frac{\Rsum \textrm{ by partial feedback}}{\Rsum \textrm{ by full feedback}} \geq \eta.
    \end{equation}
}
Since we have the expressions for the sum rate for both partial and full feedback schemes, they can be substituted in the above equation and one can solve for $\NFb$.
Here we make two simplifications and obtain a more tractable expression.
We carry this out for the \OSTBC{} diversity scheme.

First we note from \eqref{eCdfX} that we have
    \begin{equation}
        \label{eE2Characteristics}
        \sum_{m=0}^{n(\NFb-1)} e_2(\Nrb,\NFb,n,m) =1,
    \end{equation}
since $\FxCondSr (\infty)=1$ and $\Fzk(\infty)=1$ by the \CDF{} property \cite{bGarciaAw94}.
Second we note that $\FnRsumNq (x,y,z)$ in \eqref{eC1Ori} has almost the same value for large $z$ when $x$ and $y$ are fixed.
This is graphically illustrated  in \Fig{fCompI1}.
We assume that $\FnRsumNq (x,y,z_1) \simeq \FnRsumNq (x,y,z_2)$ for large $z_1$ and $z_2$.
More specifically, when we assume that $\FnRsumNq (\NTx,\frac{\NTx}{\rho c_k},n \Nrb - m) \simeq \FnRsumNq (\NTx,\frac{\NTx}{\rho c_k},\Nus)$ in \eqref{eSumRateGenFbOstbc} and using \eqref{eE2Characteristics}, the sum rate of \OSTBC{} for partial feedback in \eqref{eSumRateGenFbOstbc} reduces to\OneColumn
{
    \begin{equation*}
        \hspace{-0.0cm} \Rsum \simeq \tfrac{1}{\Nus} \sum_{k=1}^{\Nus} \FnRsumNq (\NTx,\tfrac{\NTx}{\rho c_k},\Nus) \sum_{n=1}^{\Nus} \tbinom{\Nus}{n} \left( \tfrac{\NFb}{\Nrb} \right)^n \left( 1 - \tfrac{\NFb}{\Nrb} \right)^{\Nus-n}
    \end{equation*}
    \begin{equation}
        \label{eSumRateApprOstbc}
        \hspace{-1.7cm} \stackrel{(a)}{=} \tfrac{1}{\Nus} \sum_{k=1}^{\Nus} \FnRsumNq (\NTx,\tfrac{\NTx}{\rho c_k},\Nus) \; \left( 1 - ( 1 - \tfrac{\NFb}{\Nrb})^\Nus \right),
    \end{equation}
}\TwoColumn
{
    \begin{equation*}
        \hspace{-0.0cm} \Rsum \simeq \sum_{k=1}^{\Nus} \tfrac{\FnRsumNq (\NTx,\tfrac{\NTx}{\rho c_k},\Nus)}{\Nus} \sum_{n=1}^{\Nus} \tbinom{\Nus}{n} \left( \tfrac{\NFb}{\Nrb} \right)^n \left( 1 - \tfrac{\NFb}{\Nrb} \right)^{\Nus-n}
    \end{equation*}
    \begin{equation}
        \label{eSumRateApprOstbc}
        \hspace{-1.7cm} \stackrel{(a)}{=} \sum_{k=1}^{\Nus} \tfrac{\FnRsumNq (\NTx,\tfrac{\NTx}{\rho c_k},\Nus)}{\Nus} \; \left( 1 - ( 1 - \tfrac{\NFb}{\Nrb})^\Nus \right),
    \end{equation}
}
where $(a)$ follows from the binomial theorem \cite{bGarciaAw94}.
From the sum rate obtained by a full feedback scheme in \eqref{eTwoSpecialSumRateOstbc} and the sum rate obtained by partial feedback in \eqref{eSumRateApprOstbc}, we have
    \begin{equation}
        \label{eSumRateRatio}
        \RsumRatio = \frac{\Rsum \textrm{ by partial feedback}}{\Rsum \textrm{ by full feedback}} \simeq 1 - ( 1 - \tfrac{\NFb}{\Nrb})^\Nus.
    \end{equation}
This approximation is well supported by the numerical results in \Section{sNumericalResults4RsumAnal}.
We note that the right-hand side in \eqref{eSumRateRatio} is exactly the same as the probability that at least one user provides \CQI{} to the transmitter in a block, \ie{} a probability of the complement of a scheduling outage.\footnote{Since a scheduling outage in a block happens when no user provides \CQI{} for that block, its probability is $( 1 - \tfrac{\NFb}{\Nrb})^\Nus$.}
From \eqref{eReqRfb} and \eqref{eSumRateRatio}, the required feedback ratio is given by
    \begin{equation}
        \label{eRfb}
        \RFb = \tfrac{\NFb}{\Nrb} \geq 1 - (1 - \eta)^{\frac{1}{\Nus}}.
    \end{equation}
We note here that the required feedback ratio does not depend on the number of antennas and user distribution in the average \SNR{} but on the number of users.
In the same way, we compute the required feedback ratio for \TAS{} making the same assumption about $\FnRsumNq (x,y,z)$ and obtain the same result as \eqref{eRfb}.
It is useful to note that the required feedback ratio in our system with fixed amount of feedback can be derived from the scheduling outage probability using the approximation above.
This has similarities to the problem of determining the required threshold in a threshold-based feedback system considering a scheduling outage as in \cite{bGesbertIcc04}.\OneColumn
{
   \begin{figure}
       \centering
           \includegraphics[width=12.0cm]{\PfSchFigDir/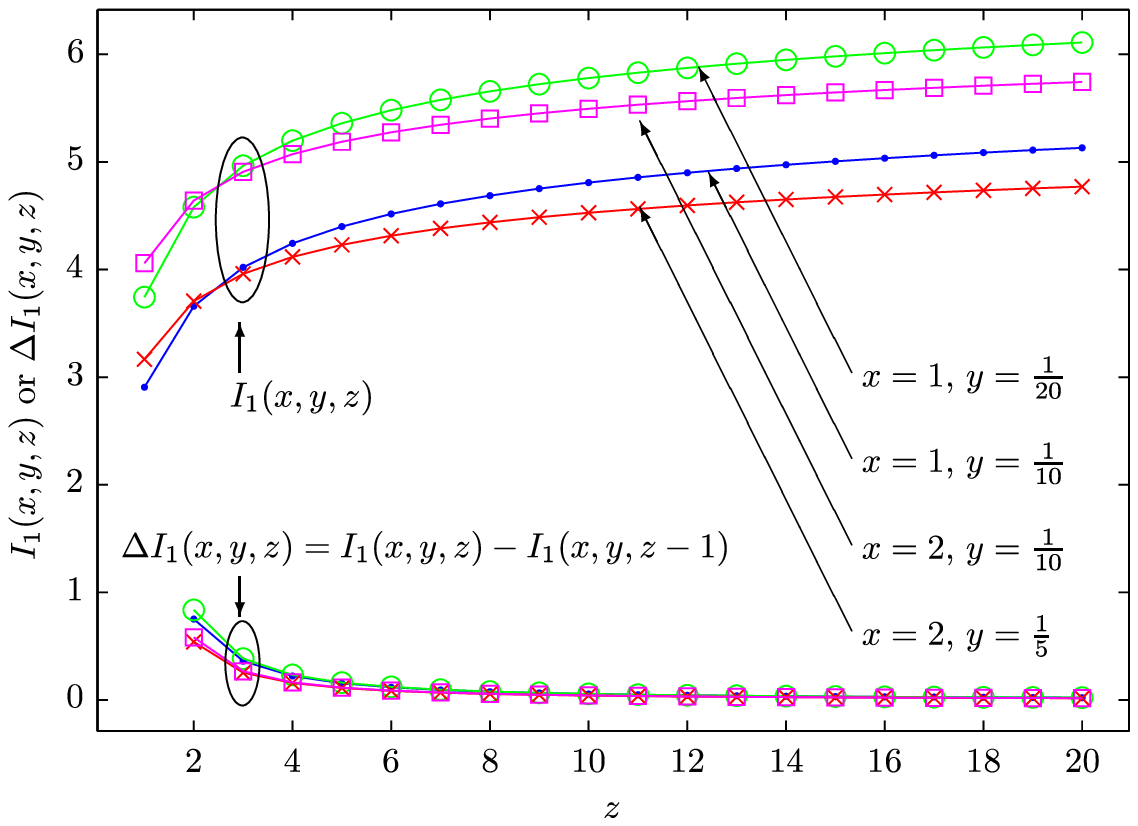}
           \caption{$\FnRsumNq (x,y,z)$ and its slope. We note that when $x$ and $y$ are fixed, the rate of increase in $\FnRsumNq (x,y,z)$ is very small when $z$ is large.}
           \label{fCompI1}
   \end{figure}
}\TwoColumn
{
   \begin{figure}
       \centering
           \includegraphics[width=8.0cm]{\PfSchFigDir/CompI1.eps}
           \caption{$\FnRsumNq (x,y,z)$ and its slope. We note that when $x$ and $y$ are fixed, the rate of increase in $\FnRsumNq (x,y,z)$ is very small when $z$ is large.}
           \label{fCompI1}
   \end{figure}
}

The above analysis was conducted assuming unquantized CQI.
A similar analysis can be carried out assuming quantized CQI and employing some approximations one can obtain the same result as \eqref{eSumRateRatio} and \eqref{eRfb}.
We omit the details.



\section{Numerical Results}
\label{sNumericalResults4RsumAnal}
In this section, we conduct a numerical study of the analytical results to obtain some insight.
To reflect asymmetrical user distribution in their average \SNR{}, we use the exponential decay model for the average channel power of users \cite{bJorswieckEurasip09}:
    \begin{equation}
        \label{eUserDist}
        c_k = c \; e^{- \lambda k}, \hspace{0.2cm} \st{} \hspace{0.2cm} \sum_{k=1}^{\Nus} c_k = \Nus.
    \end{equation}
We can see that $\lambda = 0$ corresponds to \iid{} users and that user asymmetry increases with $\lambda$.

\subsection{Effect of partial feedback on the sum rate}
In \Fig{fNfbEffect}, we show the sum rate results computed using the analytical expressions and the simulation results as a function of the number of users.
In the figure, \TAS{} with $\NTx=2$ is used and the average channel power is identical across users (\ie{} $\lambda=0$) in \Fig{fNfbEffect}(a) and different in \Fig{fNfbEffect}(b).
We can see that both analytic and simulation results are well matched.
We can also see the effect of the feedback ratio ($\RFb$) on the sum rate.
As we expect, the sum rate increases with the feedback ratio for both choices of $\lambda$.
We note that the throughput gap between best-$1$ feedback ($\RFb=0.1$) and full feedback ($\RFb=1.0$) is large even when the number of users is 20.
When the number of users is smaller than 10, we need $\RFb \geq 0.4$ to attain a throughput comparable to a full feedback scheme.
For $\lambda \neq 0$, we also note that fairness provided by proportional fair scheduling decreases the sum rate when the number of users is large, because the throughput variation is larger in the larger population and the throughput function $(\log_2(1+x))$ is concave, which is known as the fairness-capacity trade-off in \cite{bJorswieckEurasip09}.\OneColumn
{
    \begin{figure}
        \centering
        \includegraphics[width=12.0cm]{\PfSchFigDir/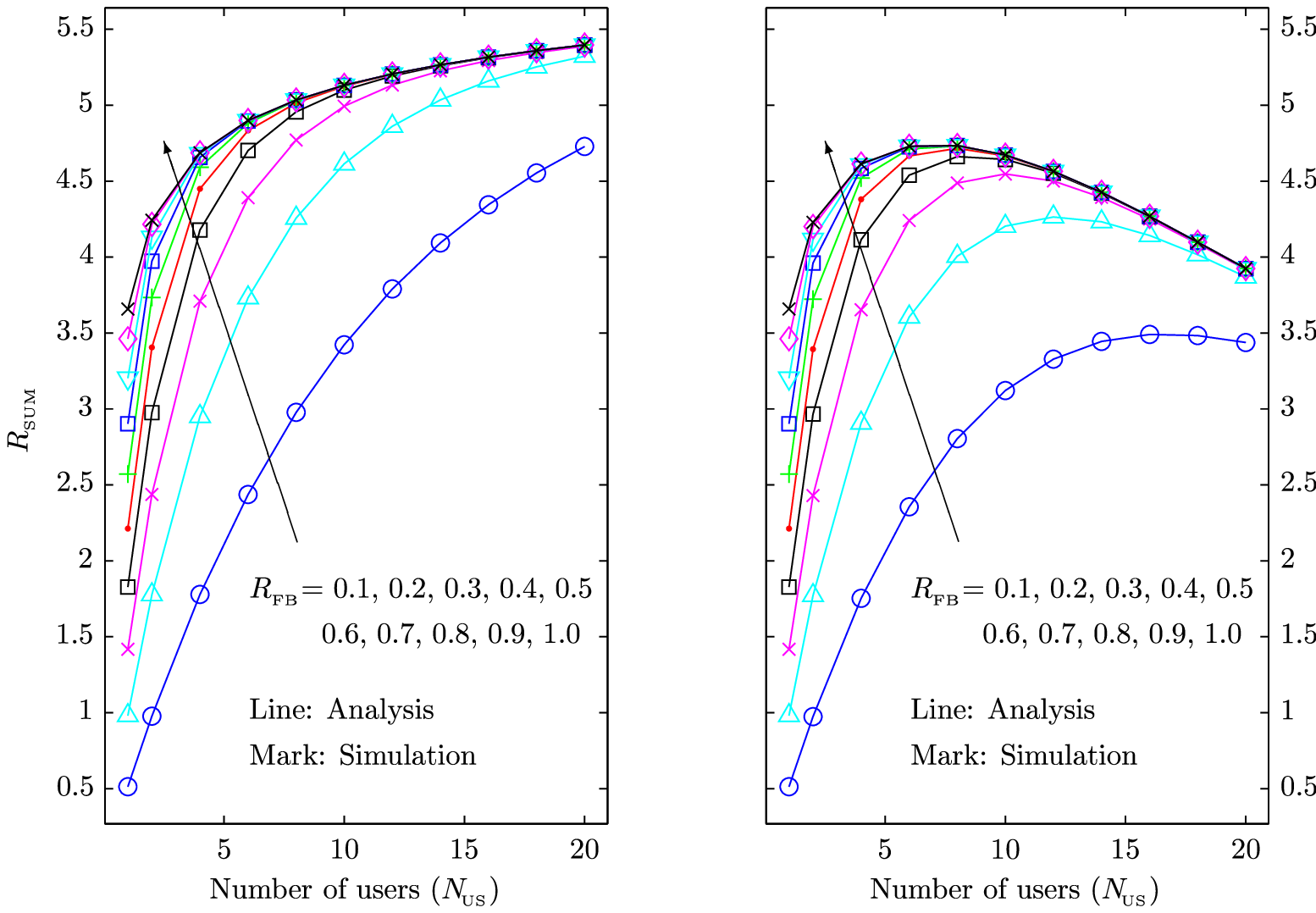}\\
        \begin{tabular}[c]{ll}
            \vspace{-1.2cm} {} & {} \\
            {\scriptsize \hspace{-0.0cm} (a) $\lambda = 0$ \hspace{2.2cm}} & {\scriptsize \hspace{0.2cm} (b) $\lambda = 0.3$ \hspace{-0.0cm}}\\
        \end{tabular}
        \vspace{-0.5cm}
        \caption{Effect of feedback ratio $(\RFb = \frac{\NFb}{\Nrb})$ on the sum rate for different $\lambda$ in \eqref{eUserDist}. (\TAS{}, $\Nrb=10$, $\NTx=2$, and Tx \SNR{}= 10\dB)}
        \label{fNfbEffect}
    \end{figure}
}\TwoColumn
{
    \begin{figure}
        \centering
        \includegraphics[width=8.0cm]{\PfSchFigDir/R10F1-10T2U1-20Sp0-3S10-TAS.eps}\\
        \begin{tabular}[c]{ll}
            \vspace{-1.2cm} {} & {} \\
            {\scriptsize \hspace{-0.0cm} (a) $\lambda = 0$ \hspace{2.2cm}} & {\scriptsize \hspace{0.2cm} (b) $\lambda = 0.3$ \hspace{-0.0cm}}\\
        \end{tabular}
        \vspace{-0.5cm}
        \caption{Effect of feedback ratio $(\RFb = \frac{\NFb}{\Nrb})$ on the sum rate for different $\lambda$ in \eqref{eUserDist}. (\TAS{}, $\Nrb=10$, $\NTx=2$, and Tx \SNR{}= 10\dB)}
        \label{fNfbEffect}
    \end{figure}
}

In \Fig{fNTxEffect}, we show the effect of the number of antennas for both \TAS{} and \OSTBC{} schemes with partial feedback.
Users are asymmetrically distributed $(\ie{}$ $\lambda \neq 0)$.
In general, multiuser diversity increases with the number of users, as well as the mean and the variance of the signal quality \cite{bHurPrep10}.
Since selection of antennas in \TAS{} can be regarded as an increase of the number of users due to the increase of candidate channels for the communication, the sum rate of \TAS{} increases with $\NTx$.
However, since \OSTBC{} decreases the variance of the signal quality by the averaging effect shown in \eqref{eCqiOstbc}, the sum rate of \OSTBC{} decreases with $\NTx$.
In both feedback ratio of $\RFb=0.1$ and $\RFb=0.5$, we can verify this effect of the number of antennas on the sum rate for each transmit antenna scheme.\OneColumn
{
   \begin{figure}
       \centering
           \includegraphics[width=12.0cm]{\PfSchFigDir/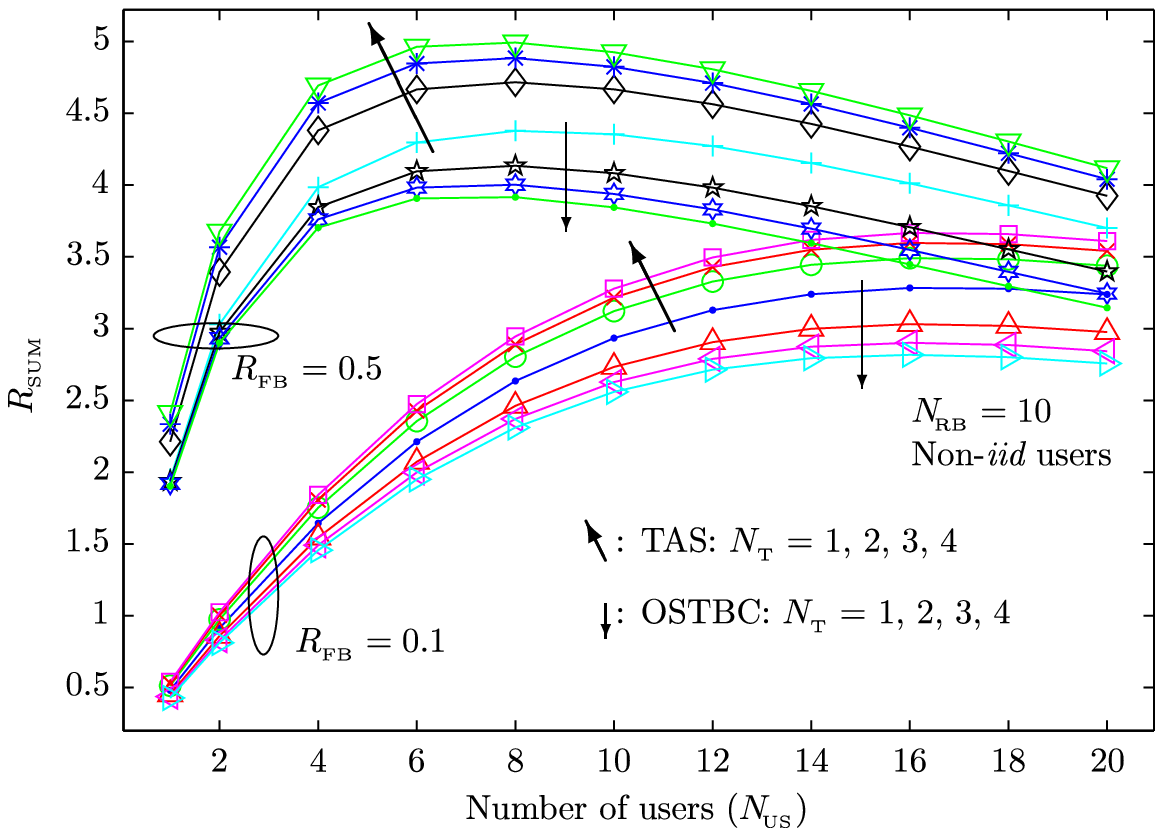}
           \caption{Effect of the number of antennas on the sum rate with partial feedback. (\TAS{} and \OSTBC, $\Nrb=10$, $\lambda$ in \eqref{eUserDist}= 0.3, and Tx \SNR{}= 10\dB.)}
           \label{fNTxEffect}
   \end{figure}
}\TwoColumn
{
   \begin{figure}
       \centering
           \includegraphics[width=8.0cm]{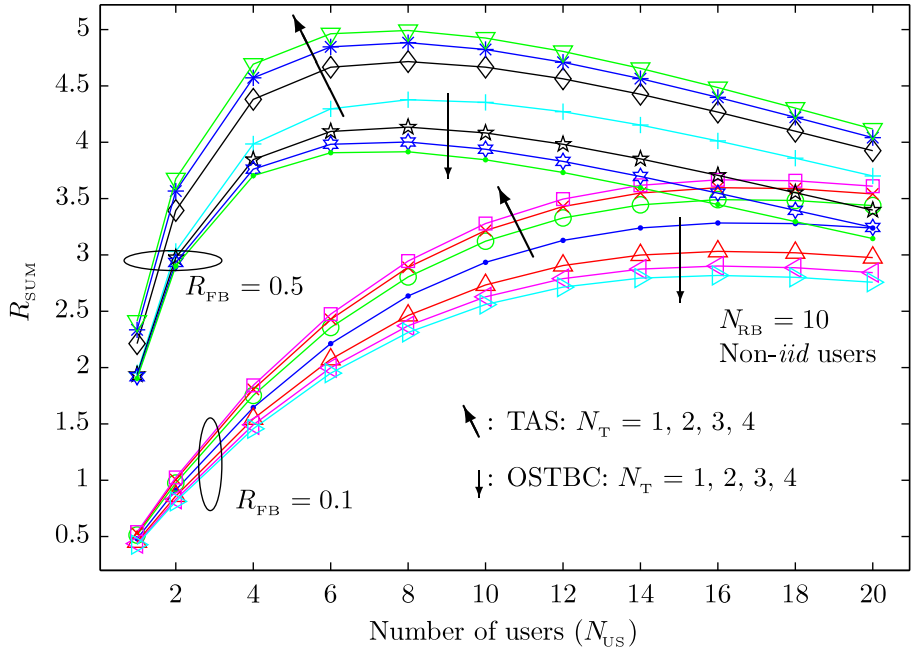}
           \caption{Effect of the number of antennas on the sum rate with partial feedback. (\TAS{} and \OSTBC, $\Nrb=10$, $\lambda$ in \eqref{eUserDist}= 0.3, and Tx \SNR{}= 10\dB.)}
           \label{fNTxEffect}
   \end{figure}
}

We show in \Fig{fSumRate4Qbit} the sum rate result for partial feedback with quantized \CQI{}.
For the quantized \CQI{} case, we consider $L=1,3,7$ and $15$ in \Fig{fQReg}, each of which corresponds to $1,2,3$ and $4$ bits in quantization $(\NQb \triangleq \lceil \log_2 (L+1) \rceil)$.
We show both the analytical and simulation results for the quantized \CQI{} case.
We find that both results are well matched.
As we can expect, the sum rate increases as the number of bits for quantization increases.
Since we focus on the analytic derivation of the sum rate for partial feedback, we do not optimize the quantization region but use the uniform quantization region, \ie{} $\Fw(\xi_\ell) = \frac{\ell}{L+1}$ for $\Fw(x)$ of \TAS{} and \OSTBC{} in \Section{sRsumQ4Tas} and \Section{sRsumQ4Ostbc}.
Finding the optimal region to maximize the sum rate considering system parameters including diversity type, the number of antennas and users, and the feedback ratio is left as future work.\OneColumn
{
   \begin{figure}
       \centering
           \includegraphics[width=12.0cm]{\PfSchFigDir/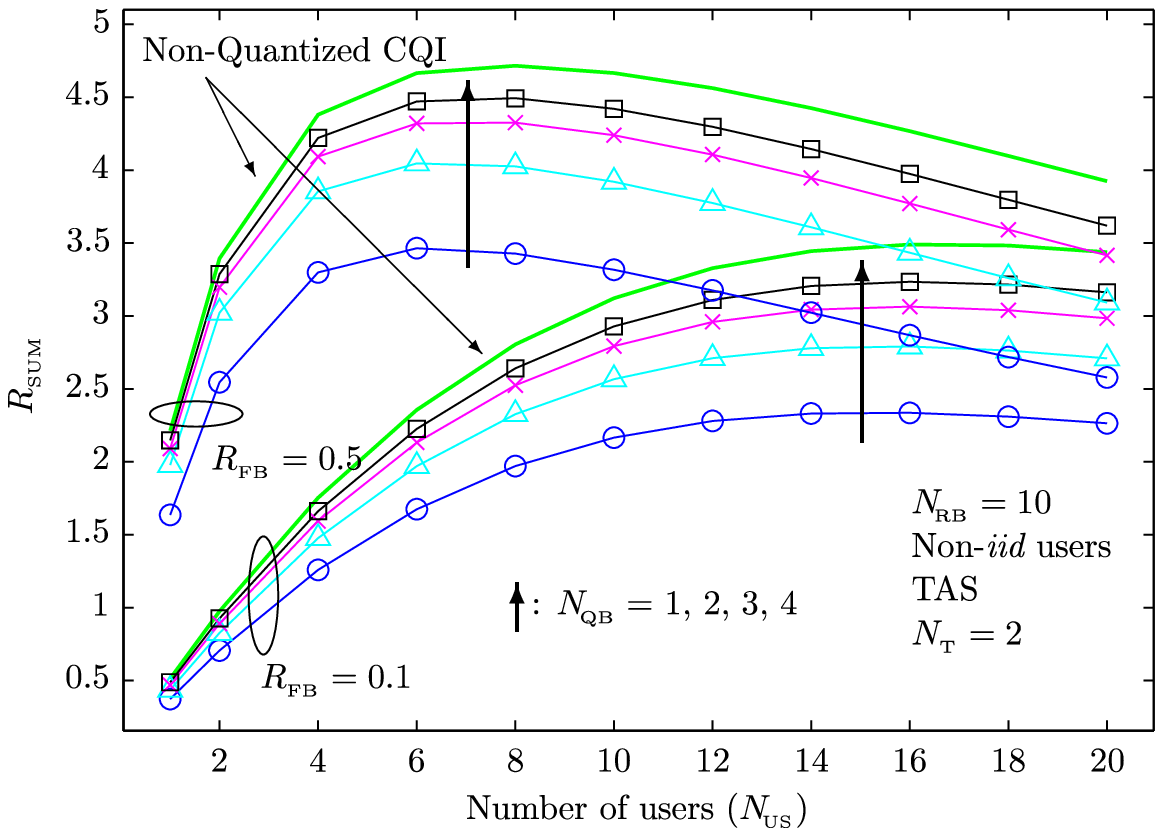}
           \caption{Comparison of the sum rate for non-quantized \CQI{} and quantized \CQI{} for the different feedback ratio. (\TAS{}, $\Nrb=10$, $\lambda$ in \eqref{eUserDist}= 0.0, and Tx \SNR{}= 10\dB.)}
           \label{fSumRate4Qbit}
   \end{figure}
}\TwoColumn
{
   \begin{figure}
       \centering
           \includegraphics[width=8.0cm]{\PfSchFigDir/R10B1-4F1,5U1-20-6.eps}
           \caption{Comparison of the sum rate for non-quantized \CQI{} and quantized \CQI{} for the different feedback ratio. (\TAS{}, $\Nrb=10$, $\lambda$ in \eqref{eUserDist}= 0.0, and Tx \SNR{}= 10\dB.)}
           \label{fSumRate4Qbit}
   \end{figure}
}

In \Fig{fFbLoad}, we show the sum rate for quantized \CQI{} with varying feedback loads.
The feedback load is defined as the number of bits to be sent back from each user, \ie{} $\LFb = \NFb (\lceil \log_2 \Nrb \rceil + \lceil \log_2 \NTx \rceil + \NQb)$.
In the figure, we compare two cases for every fixed $\LFb$ at $12,24$ and $64$ where one of $\NFb$, $\NTx$ or $\NQb$ is additionally fixed.
Specifically, when $\NFb$ is fixed at $8$ in case of $\LFb=64$, we note that the larger $\NTx$ is always preferable to the larger $\NQb$.
When $\NFb$ is made variable, for both $\LFb=12$ and $\LFb=24$ we note that the larger $\NFb$ is preferable for the small population and the larger $\NTx$ or $\NQb$ is preferable for the large population.
This suggests that $\NFb$ should be first determined based on the number of users as in \eqref{eReqRfb} and then based on the value for $\NTx$ the number of feedback bits $\NQb$ should be determined.\OneColumn
{
   \begin{figure}
       \centering
           \includegraphics[width=12.0cm]{\PfSchFigDir/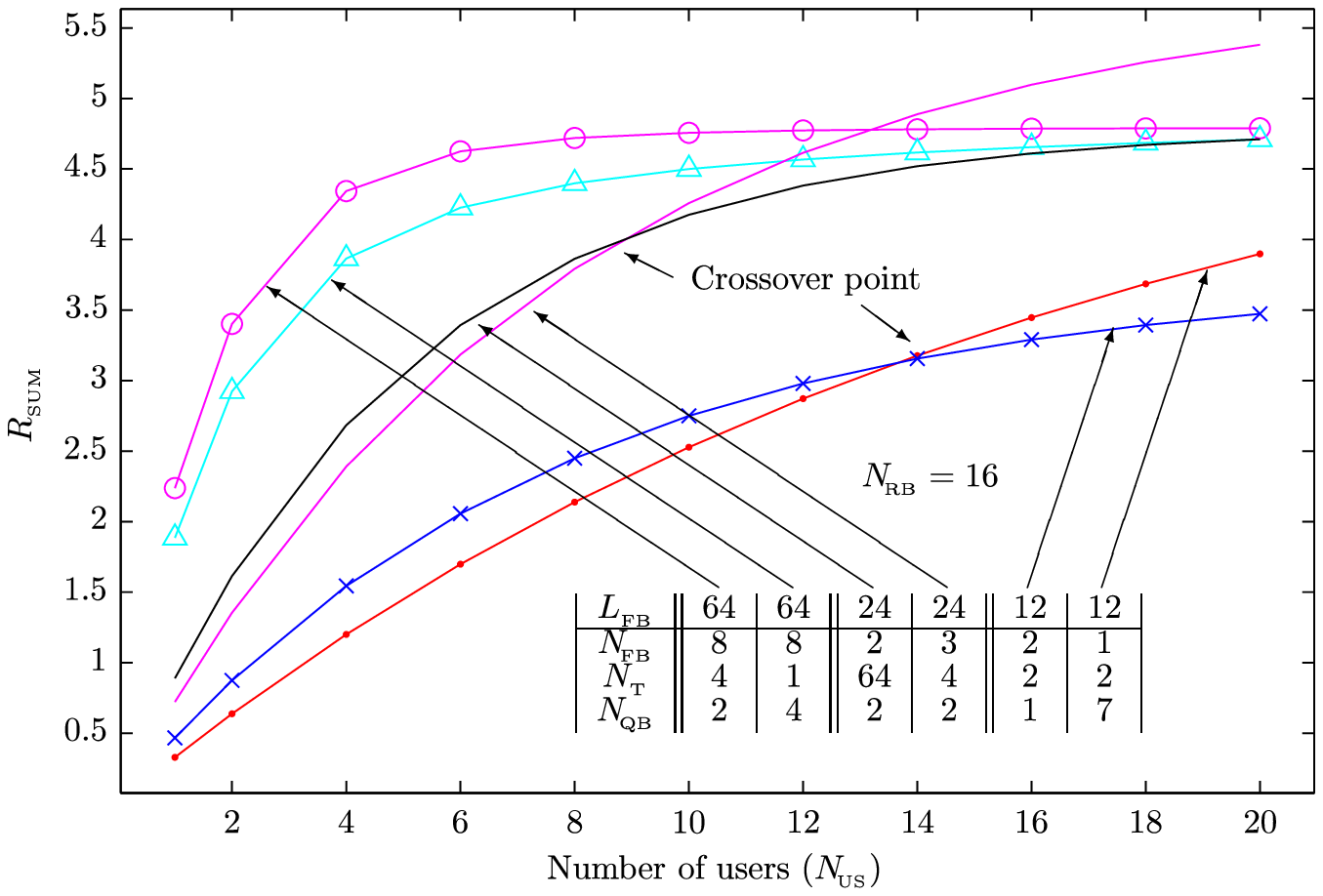}
           \caption{Comparison of the sum rate for the fixed feedback load where $\LFb = \NFb (4 + \lceil \log_2 \NTx \rceil + \NQb)$. (\TAS{}, $\Nrb=16$, $\lambda$ in \eqref{eUserDist}= 0.0, and Tx \SNR{}= 10\dB.)}
           \label{fFbLoad}
   \end{figure}
}\TwoColumn
{
   \begin{figure}
       \centering
           \includegraphics[width=8.0cm]{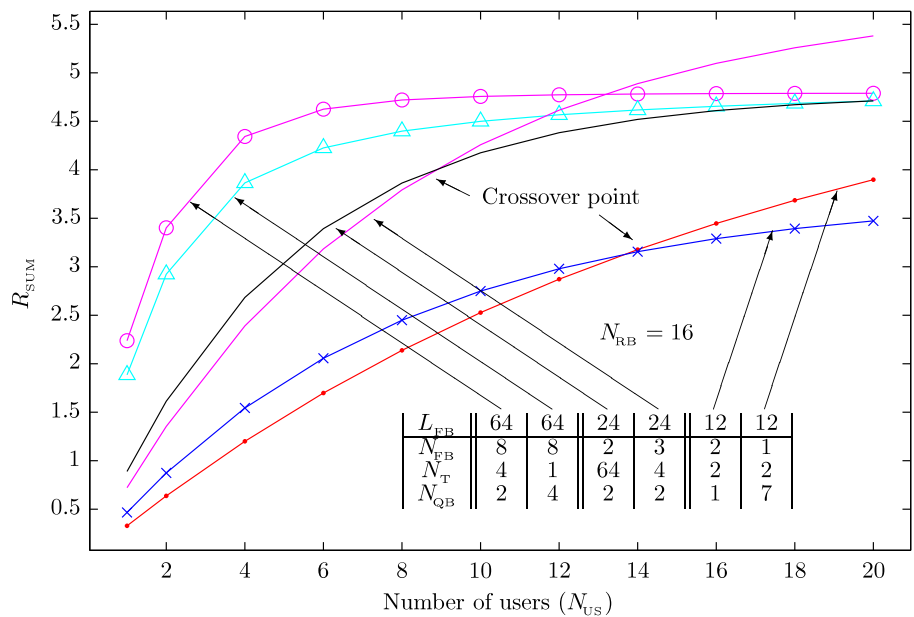}
           \caption{Comparison of the sum rate for the fixed feedback load where $\LFb = \NFb (4 + \lceil \log_2 \NTx \rceil + \NQb)$. (\TAS{}, $\Nrb=16$, $\lambda$ in \eqref{eUserDist}= 0.0, and Tx \SNR{}= 10\dB.)}
           \label{fFbLoad}
   \end{figure}
}

\subsection{The sum rate ratio and required feedback ratio}
In \Fig{fNormRsum}, we study the $\RsumRatio$, \ie{} the sum rate normalized by that of a full feedback scheme as a function of the feedback ratio.
As we expect, the feedback ratio required to achieve a large sum rate ratio decreases with increasing number of users.
We note that the sum rate ratio does not depend on the transmit antenna scheme (\ie{} \TAS{} or \OSTBC{}) and user distribution (\ie{} $\lambda$).
In \Fig{fNormSumPsch}, we can verify the tight relation between the sum rate ratio and the probability of the complement of a scheduling outage when the number of users is not so small.
These two figures support the approximation for the sum rate ratio in \Section{sRelation}, which states that the sum rate ratio is affected mainly by a scheduling outage which is caused when no user provides \CQI{} for a block and that the probability of a scheduling outage depends only on the number of users and the feedback ratio as in \eqref{eSumRateRatio}.
In \Fig{fNormSumPsch}, we also note that the sum rate ratio in the small population (\ie{} $\Nus=2$) moves toward the approximation when the number of antennas increases since the approximation for $\FnRsumNq (x,y,z)$ holds better for larger $\NTx$ especially for \TAS{}.\OneColumn
{
   \begin{figure}
       \centering
           \includegraphics[width=12.0cm]{\PfSchFigDir/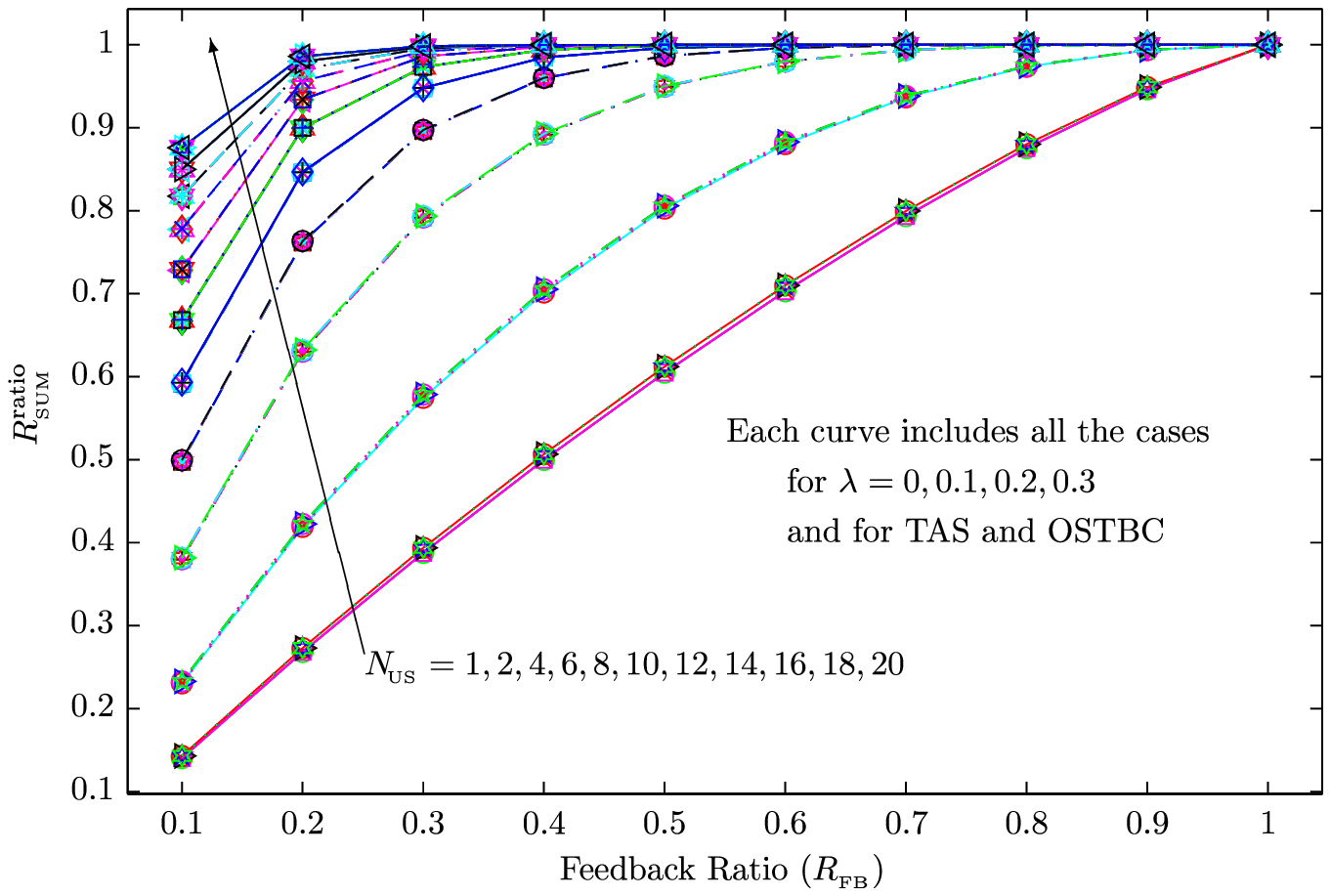}
           \caption{$\Rsum$ normalized by that of a full feedback scheme vs. feedback ratio. We note that the normalized values are independent of transmit antenna scheme (\TAS{} or \OSTBC{}) and user distribution (Slopes).}
           \label{fNormRsum}
   \end{figure}
   \begin{figure}
       \centering
           \includegraphics[width=12.0cm]{\PfSchFigDir/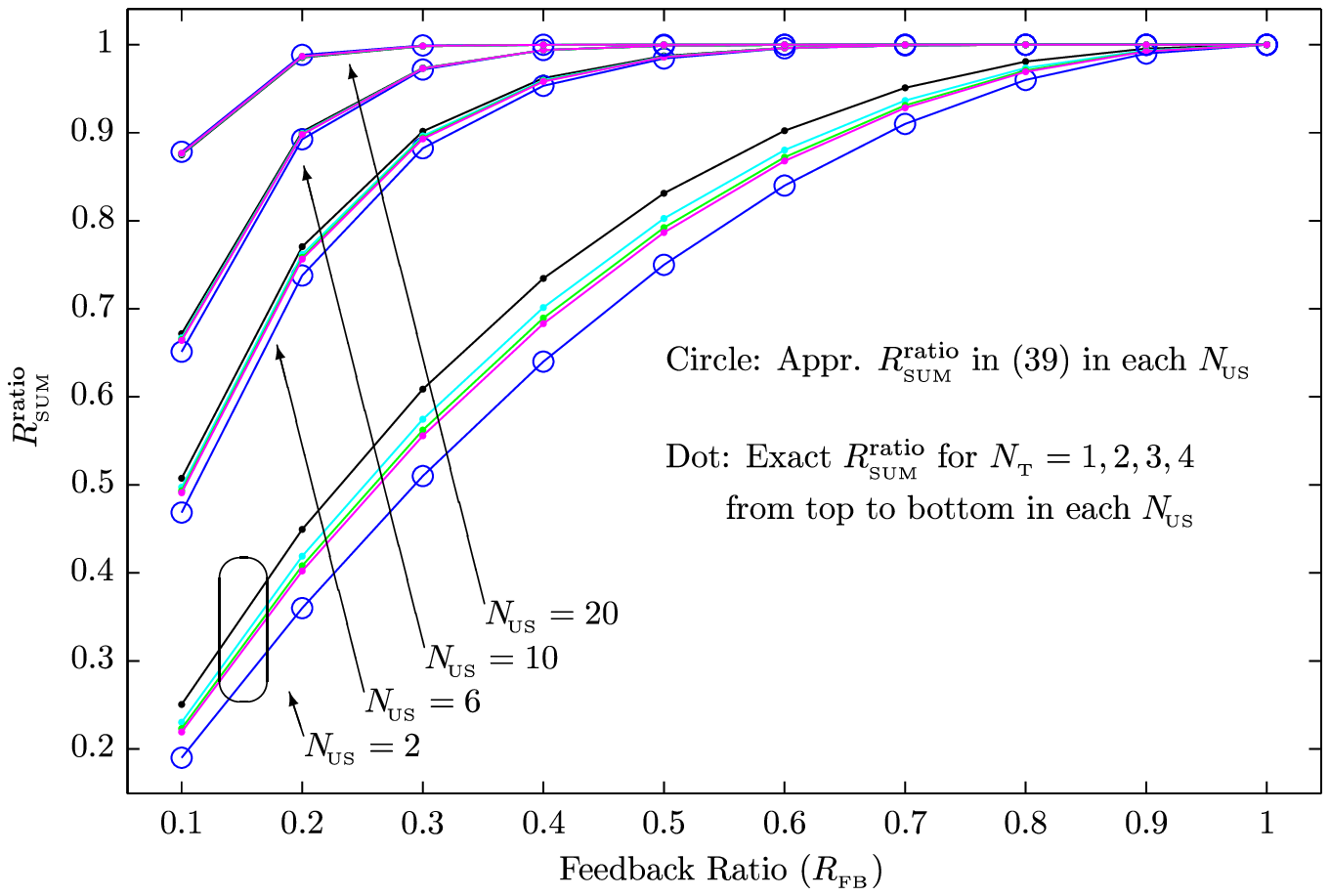}
           \caption{$\Rsum$ normalized by that of a full feedback scheme and the probability of normal scheduling vs. feedback ratio.}
           \label{fNormSumPsch}
   \end{figure}
}\TwoColumn
{
   \begin{figure}
       \centering
           \includegraphics[width=8.0cm]{\PfSchFigDir/NormRsum_RFb_R10.eps}
           \caption{$\Rsum$ normalized by that of a full feedback scheme vs. feedback ratio. We note that the normalized values are independent of transmit antenna scheme (\TAS{} or \OSTBC{}) and user distribution (Slopes).}
           \label{fNormRsum}
   \end{figure}
   \begin{figure}
       \centering
           \includegraphics[width=8.0cm]{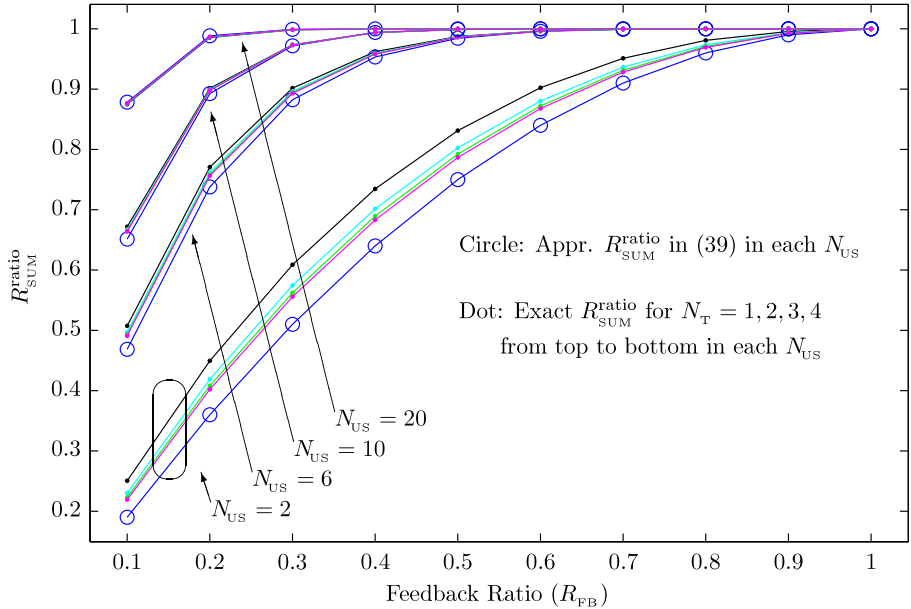}
           \caption{$\Rsum$ normalized by that of a full feedback scheme and the probability of normal scheduling vs. feedback ratio.}
           \label{fNormSumPsch}
   \end{figure}
}

In \Fig{fReqRFb}, we show the required feedback ratio to achieve a pre-determined sum rate ratio.
As the number of users increases, the required feedback ratio decreases because the number of \CQI{} values from all users increases and the scheduling outage probability decreases.
On the other hand, we see that the required feedback ratio increases with the threshold for the smaller scheduling outage probability.
We also note that the required feedback ratio is nearly independent of the transmit antenna scheme and the user distribution.
That is, the required feedback ratio is mainly dependent on the number of users.
Consequently, using this relation, we can determine the appropriate feedback ratio in designing a system.\OneColumn
{
   \begin{figure}
       \centering
           \includegraphics[width=12.0cm]{\PfSchFigDir/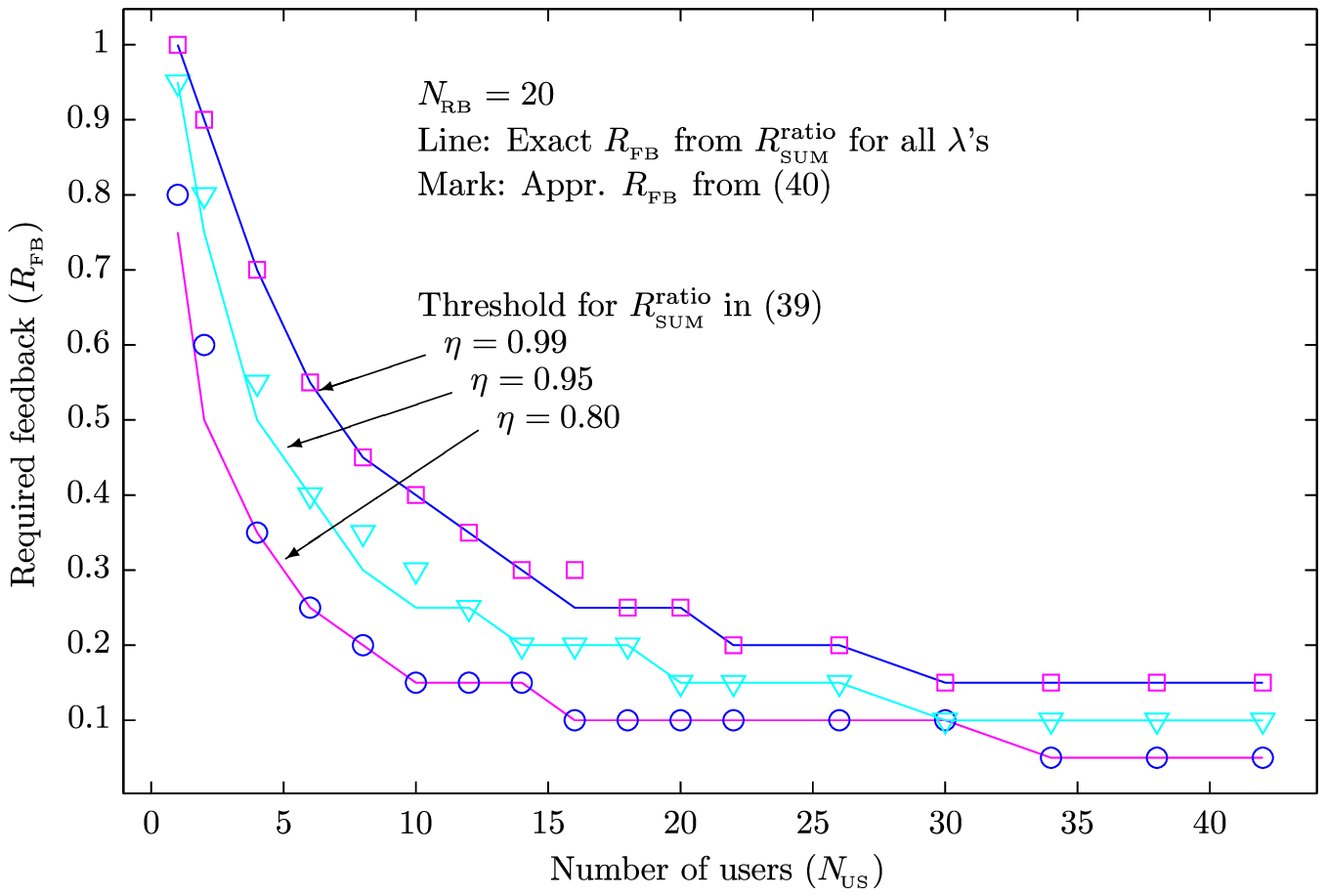}
           \caption{Required feedback ratio to achieve a pre-determined sum rate compared to that by a full feedback scheme.}
           \label{fReqRFb}
   \end{figure}
}\TwoColumn
{
   \begin{figure}
       \centering
           \includegraphics[width=8.0cm]{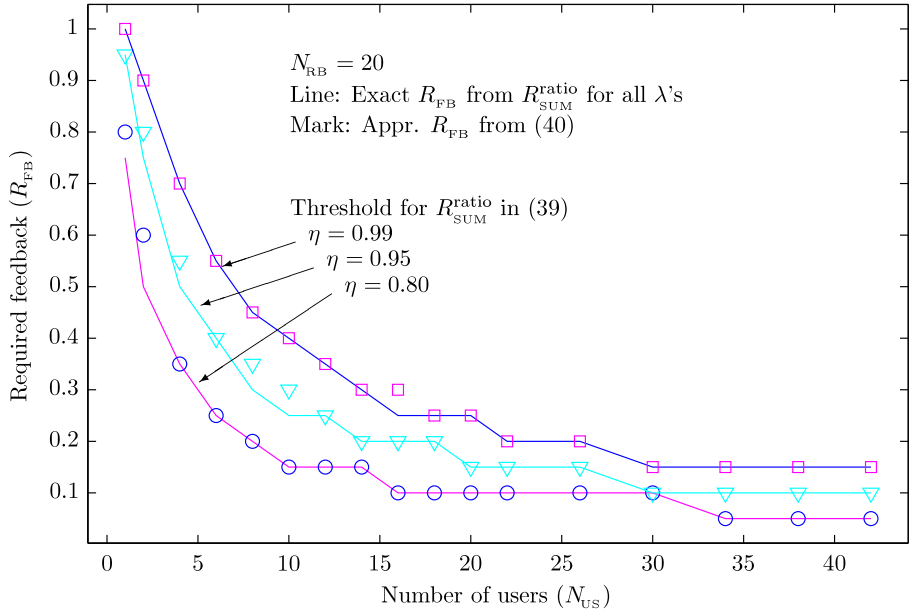}
           \caption{Required feedback ratio to achieve a pre-determined sum rate compared to that by a full feedback scheme.}
           \label{fReqRFb}
   \end{figure}
}

\section{Conclusion}
\label{sConclusion4RsumAnal}
We considered joint scheduling and diversity to enhance the benefits of multiuser diversity in a multiuser \OFDMA{} scheduling system.
We considered the role of partial feedback and developed a unified framework to analyze the sum rate of reduced feedback schemes employing three different multi-antenna transmitter schemes; Transmit antenna selection (\TAS), orthogonal space time block codes (\OSTBC) and cyclic delay diversity (\CDD).
Specifically, for the reduced feedback scheme wherein each user feeds back the best-$\NFb$ \CQI{} values out of a total of $\Nrb$ \CQI{} values, both quantized and non-quantized \CQI{} feedback were addressed.
Considering largest normalized \CQI{} scheduling in each block, closed-form expressions were derived for the sum rate for all the three multi-antenna transmitter schemes.
Further, by approximating the sum rate expression, we derived a simple expression for the minimum required feedback ratio $(\frac{\NFb}{\Nrb})$ to achieve a sum rate comparable to the sum rate obtained by a full feedback scheme.

\appendices
\section{Proof of \Lemma{lFyk}}
\label{sPf4LemFyk}
    The $\Zkr$'s are \iid{} in $r$ and thus $\Ykr$'s in \eqref{eYk} are \iid{} in $r$, which leads to the simplification in notation $\Fyk(x) \triangleq \Fykr(x) = \Pr \{ \Ykr \leq x \}$ and $\Fzk(x) \triangleq {{F_{{\hspace{-0.1cm} {}_{X_{k,r}}}}}}$.
    For additional simplicity in derivation, we first consider the case $\rho=1$ in \eqref{eYk}.
    Since $\Ykr$ is selected among best-$\MinRankingInPf$ random variables, using Bayes' rule \cite{bGarciaAw94}, we have\OneColumn
{
        \begin{equation*}
            \label{eFy04}
            \Fyk(x) = \sum_{m=\NcolInPf-\MinRankingInPf+1}^\NcolInPf \Pr \{ \Ykr = Z_{k,(m)} \} \Pr \{ \Ykr \leq x | \Ykr = Z_{k,(m)}\}.
        \end{equation*}
}\TwoColumn
{
        \begin{equation*}
            \hspace{-1.0cm} \Fyk(x) = \sum_{m=\NcolInPf-\MinRankingInPf+1}^\NcolInPf \Pr \{ \Ykr = Z_{k,(m)} \}
        \end{equation*}
        \begin{equation*}
            \label{eFy04}
            \hspace{3.0cm} \times \Pr \{ \Ykr \leq x | \Ykr = Z_{k,(m)}\}.
        \end{equation*}
}
    We note that $\Pr \{ \Ykr \leq x | \Ykr = Z_{k,(m)}\} = \Pr\{ Z_{k,(m)} \leq x \} = I_{\Fzk(x)} (m,\NcolInPf-m+1)$, where $I_x(\cdot,\cdot)$ denotes an incomplete Beta function \cite[2.1.5]{bDavidJwas04}, and that $\Pr \{ \Ykr = Z_{k,(m)} \} = \Pr \{ \Rkr = m \} = \frac{1}{\MinRankingInPf}$.
    With a suitable change of variables followed by using a summation form of the incomplete Beta function \cite[2.1.3]{bDavidJwas04}, we have\OneColumn
{
        \begin{equation}
            \label{eFyInSumBetaFnc}
            \hspace{-0.0cm} \Fyk(x)= \frac{1}{\MinRankingInPf} \sum_{m=1}^{\MinRankingInPf} \sum_{\ell=\NcolInPf - m + 1}^{\NcolInPf} \binom{\NcolInPf}{\ell} \{ \Fzk (x) \}^{\ell} \{ 1 - \Fzk(x) \}^{\NcolInPf - \ell}.
        \end{equation}
}\TwoColumn
{
        \begin{equation}
            \label{eFyInSumBetaFnc}
            \hspace{-0.0cm} \Fyk(x)= \sum_{m=1}^{\MinRankingInPf} \sum_{\ell=\NcolInPf - m + 1}^{\NcolInPf} \frac{\binom{\NcolInPf}{\ell}}{\MinRankingInPf} \{ \Fzk (x) \}^{\ell} \{ 1 - \Fzk(x) \}^{\NcolInPf - \ell}.
        \end{equation}
}
    We note in \eqref{eFyInSumBetaFnc} that $\Fyk(x)$ is a polynomial form of $\Fzk(x)$.
    Finding a coefficient for each power of $\Fzk(x)$, we can more directly represent $\Fyk(x)$ in terms of a polynomial in $\Fzk(x)$, a form suitable for the subsequent analysis.
    Thus, our purpose is to find the coefficients for those terms.
    Then we have
        \begin{equation}
            \label{eFy01}
            \hspace{-0.0cm} \Fyk(x) \stackrel{(a)}{=} \sum_{\ell=0}^{\MinRankingInPf-1} \frac{\MinRankingInPf - \ell}{\MinRankingInPf}  \binom{\NcolInPf}{\ell} \{ \Fzk (x) \}^{\NcolInPf - \ell} \{ 1 - \Fzk (x) \}^{\ell}
        \end{equation}
        \begin{equation}
            \label{eFy02}
            \hspace{0.1cm} \stackrel{(b)}{=} \sum_{\ell=0}^{\MinRankingInPf-1} \sum_{r=0}^{\ell} \frac{\MinRankingInPf - \ell}{\MinRankingInPf} \binom{\NcolInPf}{\ell} \binom{\ell}{r} (-1)^r \{ \Fzk (x) \}^{\NcolInPf - \ell + r}
        \end{equation}
        \begin{equation}
            \label{eFy03}
            \hspace{-0.2cm} \stackrel{(c)}{=} \sum_{m=0}^{\MinRankingInPf-1} \sum_{\ell=m}^{\MinRankingInPf-1} \frac{\MinRankingInPf - \ell}{\MinRankingInPf} \binom{\NcolInPf}{\ell} \binom{\ell}{m} (-1)^{\ell - m} \{ \Fzk (x) \}^{\NcolInPf - m},
        \end{equation}
    where $(a)$ follows from switching the order of $m$ and $\ell$ in \eqref{eFyInSumBetaFnc} and adjusting $\ell$;
    $(b)$ follows from applying the binomial theorem \cite{bGarciaAw94} to $\{ 1 - \Fzk (x) \}^{\ell}$ in \eqref{eFy01};
    $(c)$ follows from replacing $\ell-r$ with $m$ in \eqref{eFy02} and switching $m$ and $\ell$.
    Since the power of $\Fzk(x)$ is independent of $\ell$ in \eqref{eFy03}, we can represent \eqref{eFy03} as \eqref{eFy00} with $e_1(\NcolInPf,\MinRankingInPf,m)$ given by \eqref{eE1abm} after considering a constant $\rho$.

\section{Proof of \Corollary{cE1}}
\label{sPf4CorrE1}
    When $\MinRankingInPf=\NcolInPf$, \eqref{eE1abm} reduces to $e_1(\NcolInPf,\NcolInPf,m)= \sum_{\ell=m}^{\NcolInPf-1} \binom{\NcolInPf-1}{\ell} \binom{\ell}{m} (-1)^{\ell-m}$.
    When we take the derivative $m$ times with respective to $x$ of the binomial expansion of $(1-x)^{\NcolInPf-1} = \sum_{\ell=0}^{\NcolInPf-1} \binom{\NcolInPf-1}{m} (-1)^\ell x^\ell$ and divide both sides by $m!$, we have
        \begin{equation}
            \label{eBinomExpansion}
            (-1)^m \tbinom{\NcolInPf-1}{m} (1-x)^{\NcolInPf-m-1} = \sum_{\ell=m}^{\NcolInPf-1} \tbinom{\NcolInPf-1}{\ell} \tbinom{\ell}{m} (-1)^\ell x^{\ell - m}.
        \end{equation}
    When we plug $x=1$ in both sides and divide both sides by $(-1)^m$, we can find that $e_1(\NcolInPf,\NcolInPf,m) = \sum_{\ell=m}^{\NcolInPf-1} \binom{\NcolInPf-1}{\ell} \binom{\ell}{m} (-1)^{\ell-m} = 1$ for $m=\NcolInPf-1$, and $0$ otherwise.

\section{Derivation of the conditional \CDF{} of $X_r$}
\label{sDeriCdfX}
Following the notations in \Section{sStepOne4Nq}, since a selected user is $k$ and the number of users who provided \CQI{} to the transmitter is $n$, we have the conditional \CDF{} of $X_r$ as\OneColumn
{
    \begin{equation}
        \label{eFyDeri}
        \hspace{-0.0 cm} \FxCondSr (x) \stackrel{(a)}{=} \Pr \{ X_r \leq x \; | \; k_r^* = k, |S_r|=n \} \stackrel{(b)}{=} \Pr \{ \Ykr \leq x \; | \; k_r^* = k, |S_r|=n \}
    \end{equation}
    \begin{equation*}
        \hspace{-0.0 cm} \stackrel{(c)}{=} \Pr \left \{ \Ukr \leq \frac{x}{\rho c_{k}} \; | \; k_r^* = k, |S_r|=n \right \} \stackrel{(d)}{=} \Pr \left \{ U_{i,r} \leq \frac{x}{\rho c_{k}} \; , \; \forall i \in S_r \; | \; |S_r|=n \right \}
    \end{equation*}
    \begin{equation*}
        \hspace{-0.0 cm} \stackrel{(e)}{=} \prod_{i \in S_r, |S_r|=n } \Pr \left \{ U_{i,r} \leq \frac{x}{\rho c_{k}} \right \} \stackrel{(f)}{=} \prod_{i \in S_r, |S_r|=n} \Pr \{ \rho c_{k} \Ukr \leq x \} \stackrel{(g)}{=} \prod_{i \in S_r, |S_r|=n} \Fyk (x) = \left \{ \Fyk (x) \right \}^{n},
    \end{equation*}
}\TwoColumn
{
    \begin{equation}
        \label{eFyDeri}
        \hspace{-0.0 cm} \FxCondSr (x) \stackrel{(a)}{=} \Pr \{ X_r \leq x \; | \; k_r^* = k, |S_r|=n \}
    \end{equation}
    \begin{equation*}
        \hspace{-4.0 cm} \stackrel{(b)}{=} \Pr \{ \Ykr \leq x \; | \; k_r^* = k, |S_r|=n \}
    \end{equation*}
    \begin{equation*}
        \hspace{-3.3 cm} \stackrel{(c)}{=} \Pr \left \{ \Ukr \leq \frac{x}{\rho c_{k}} \; | \; k_r^* = k, |S_r|=n \right \}
    \end{equation*}
    \begin{equation*}
        \hspace{-3.2 cm} \stackrel{(d)}{=} \Pr \left \{ U_{i,r} \leq \frac{x}{\rho c_{k}} \; , \; \forall i \in S_r \; | \; |S_r|=n \right \}
    \end{equation*}
    \begin{equation*}
        \hspace{-0.2 cm} \stackrel{(e)}{=} \prod_{i \in S_r, |S_r|=n } \Pr \left \{ U_{i,r} \leq \frac{x}{\rho c_{k}} \right \} \stackrel{(f)}{=} \prod_{i \in S_r, |S_r|=n} \Pr \{ \rho c_{k} \Ukr \leq x \}
    \end{equation*}
    \begin{equation*}
        \hspace{-4.0 cm} \stackrel{(g)}{=} \prod_{i \in S_r, |S_r|=n} \Fyk (x) = \left \{ \Fyk (x) \right \}^{n},
    \end{equation*}
}
where $(a)$ follows from the definition of \CDF; $(b)$ from $X_r = \Ykr$ because user-$k$ is selected; $(c)$ from the definition of $\Ukr$; $(d)$ from that $\Ukr$ is the maximum among users in $S_r$; $(e)$ from \iid{} property of $U_{i,r}$ in $i$; $(f)$ from the identical distribution of $U_{i,r}$ in $i$; $(g)$ from the definition of $\Ykr$ and its \CDF{}.

\section{Proof of \Lemma{lFxk}}
\label{sPf4LemFxk}
    From \eqref{eFy00} and \eqref{eFx00}, we have\OneColumn
{
        \begin{equation}
            \label{eMaxNfbFyk01}
            \hspace{0.0cm} \FxCondSr(x) = \{ \Fyk (x) \}^n = \{ \Fzk (\tfrac{x}{\rho}) \}^{\NrowVal \NcolInPf} \left \{ \sum_{m=0}^{\MinRankingInPf-1} \frac{e_1(\NcolInPf,\MinRankingInPf,m)}{\{ \Fzk (\tfrac{x}{\rho}) \}^{m}} \right \}^\NrowVal.
        \end{equation}
}\TwoColumn
{
        \begin{equation*}
            \hspace{-3.0cm} \FxCondSr(x) = \{ \Fyk (x) \}^n
        \end{equation*}
        \begin{equation}
            \label{eMaxNfbFyk01}
            \hspace{0.0cm} = \{ \Fzk (\tfrac{x}{\rho}) \}^{\NrowVal \NcolInPf} \left \{ \sum_{m=0}^{\MinRankingInPf-1} \frac{e_1(\NcolInPf,\MinRankingInPf,m)}{\{ \Fzk (\tfrac{x}{\rho}) \}^{m}} \right \}^\NrowVal.
        \end{equation}
}
    Applying the same technique as in \cite[0.314]{bGradshteynAp00} and \cite[(16)]{bChenTcom06ForLetter} to a finite-order polynomial, we can express the above equation in a polynomial form and compute the coefficients for each term.
    More specifically, regarding \eqref{eMaxNfbFyk01} as a polynomial in $\frac{1}{\Fzk(\tfrac{x}{\rho})}$, we can calculate coefficients for $\frac{1}{\Fzk(\tfrac{x}{\rho})}$ in a recursive form as given by \eqref{eE2}, and $\FxCondSr(x)$ has the form given by \eqref{eCdfX}.

\section{Derivation of $\FnRsumNq (x,y,z)$}
\label{sDeriC1}
Following the approach in \cite{bChenTcom06}, we can compute $\FnRsumNq (x,y,z)$ in \eqref{eC1Ori}.
We note that the final form we have in \eqref{eC1Ori} is much better than that in \cite[(15), (42)]{bChenTcom06} in evaluating large values for the arguments.

The \PDF{} of $Z$ which follows the Gamma distribution with $\Gcal(\alpha,\beta)$ is given by $\fz(z)= \frac{\beta^\alpha}{\Gamma(\alpha)} z^{\alpha-1} e^{-\beta z}$ from the derivative of \CDF{} in \eqref{eGammaCdf}.
When $\alpha$ is a positive integer, the \CDF{} in \eqref{eGammaCdf} is represented by direct integration as $\Fz(z) = 1 - e^{- \beta z} \sum_{i=0}^{\alpha-1} \frac{(\beta z)^i}{i!}$.
Since $ d \{ \Fz(z)\}^n = n \{ \Fz(z) \}^{n-1} \fz(z) dz$, we have from \cite[(18)]{bFedeleETcom96} and \cite[(40)]{bChenTcom06}\OneColumn
{
    \begin{equation}
        d \{ \Fz(z)\}^n = \tfrac{n}{(\alpha-1)!} \sum_{k=0}^{n-1} (-1)^{k} \tbinom{n-1}{k} \sum_{i=0}^{k(\alpha-1)} b_{k,i} \beta^{\alpha+i} e^{-(k+1) \beta z} z^{\alpha + i -1} dz
    \end{equation}
}\TwoColumn
{
    \begin{equation*}
        \hspace{-3.0cm} d \{ \Fz(z)\}^n = \tfrac{n}{(\alpha-1)!} \sum_{k=0}^{n-1} (-1)^{k} \tbinom{n-1}{k}
    \end{equation*}
    \begin{equation}
        \hspace{1.0cm} \times \sum_{i=0}^{k(\alpha-1)} b_{k,i} \beta^{\alpha+i} e^{-(k+1) \beta z} z^{\alpha + i -1}
    \end{equation}
}
for $b_{k,i}$ in \eqref{ebki}.
Then, using the integration identity $\int_0^\infty z^{n-1} e^{-x z} \ln(1+z) dz = (n-1)! e^x \sum_{\ell=1}^{n} \frac{\Gamma(\ell-n,x)}{x^\ell}$ \cite[(78)]{bAlouiniTvt99}, we have for $\FnRsumNq (\alpha,\beta,n)= \int_0^\infty \log(1+z) d \{ \Fz(z) \}^n$ as \cite[(42)]{bChenTcom06}\OneColumn
{
    \begin{equation}
        \label{eC1Ref}
        \tfrac{n}{(\alpha-1)! \ln 2} \sum_{k=0}^{n-1} (-1)^{k} \tbinom{n-1}{k} \sum_{i=0}^{k(\alpha-1)} b_{k,i} \beta^{\alpha+i} e^{(k+1) \beta} (\alpha+i-1)! \sum_{\ell=1}^{\alpha+i} \left[ \tfrac{1}{(k+1) \beta} \right]^\ell \Gamma(\ell-\alpha - i, (k+1) \beta).
    \end{equation}
}\TwoColumn
{
    \begin{equation*}
        \tfrac{n}{(\alpha-1)! \ln 2} \sum_{k=0}^{n-1} (-1)^{k} \tbinom{n-1}{k} \sum_{i=0}^{k(\alpha-1)} b_{k,i} \beta^{\alpha+i} e^{(k+1) \beta} (\alpha+i-1)!
    \end{equation*}
    \begin{equation}
        \label{eC1Ref}
        \times \sum_{\ell=1}^{\alpha+i} \left[ \tfrac{1}{(k+1) \beta} \right]^\ell \Gamma(\ell-\alpha - i, (k+1) \beta).
    \end{equation}
}
By adjusting summation index for $\ell$ and replacing $\alpha$, $\beta$, and $n$ with $x$, $y$, and $z$ respectively, we can have \eqref{eC1Ori}.
When $\alpha=1$, we follow the same procedure and use the integration identity $\int_0^\infty e^{-xt} \ln(1+yt) dt \stackrel{(a)}{=} \frac{1}{x} e^{\frac{x}{y}} \int_{\frac{x}{y}}^\infty \frac{e^t}{t} dt \stackrel{(b)}{=} \Gamma(0,\frac{x}{y})$ to obtain \eqref{eC1Red}, where $(a)$ follows from \cite[4.337.2, 8.211.1]{bGradshteynAp00} and $(b)$ follows from \cite[8.350.2]{bGradshteynAp00}.

\section{Proof of \iid{} property for $\UkrQuan$}
\label{sStatOfUkrQuan}
Since $\UkrQuan$ is equivalent to a quantized value of $\Ukr$ by the policy in \eqref{eQuanPolicy}, we have\OneColumn
{
    \begin{equation}
        \label{eIdenticalDistriOfUkrQuan}
        \Pr \{ \UkrQuan = J_\ell \} \stackrel{(a)}{=} \Pr \{ \xi_{\ell} \leq \Ukr < \xi_{\ell+1} \} \stackrel{(b)}{=} \Pr \{ \xi_{\ell} \leq \Umn < \xi_{\ell+1} \} \stackrel{(c)}{=} \Pr \{ \UmnQuan = J_\ell \}
    \end{equation}
}\TwoColumn
{
    \begin{equation*}
        \hspace{-3.0cm} \Pr \{ \UkrQuan = J_\ell \} \stackrel{(a)}{=} \Pr \{ \xi_{\ell} \leq \Ukr < \xi_{\ell+1} \}
    \end{equation*}
    \begin{equation}
        \label{eIdenticalDistriOfUkrQuan}
        \stackrel{(b)}{=} \Pr \{ \xi_{\ell} \leq \Umn < \xi_{\ell+1} \} \stackrel{(c)}{=} \Pr \{ \UmnQuan = J_\ell \}
    \end{equation}
}
where $(a)$ and $(c)$ follows from the quantization policy in \eqref{eQuanPolicy} and $(b)$ follows that $\Ukr$ is identically distributed. Therefore, $\UkrQuan$ is identically distributed.
Further, we have\OneColumn
{
    \begin{equation*}
        \hspace{-5.0cm} \Pr \Big \{ \bigcap_{k=1}^{\Nus} \bigcap_{r=1}^{\Nrb} \UkrQuan = J_{\ell_{k,r}} \Big \} \stackrel{(a)}{=} \Pr \Big \{ \bigcap_{k=1}^{\Nus} \bigcap_{r=1}^{\Nrb} \xi_{\ell_{k,r}} \leq \Ukr < \xi_{\ell_{k,r}+1} \Big \}
    \end{equation*}
    \begin{equation}
        \label{eIndDistriOfUkrQuan}
        \hspace{3.5cm} \stackrel{(b)}{=} \prod_{k=1}^{\Nus} \prod_{r=1}^{\Nrb} \Pr \{ \xi_{\ell_{k,r}} \leq \Ukr < \xi_{\ell_{k,r}+1} \} \stackrel{(c)}{=} \prod_{k=1}^{\Nus} \prod_{r=1}^{\Nrb} \Pr \{ \UkrQuan = J_{\ell_{k,r}} \}
    \end{equation}
}\TwoColumn
{
    \begin{equation*}
        \hspace{-4.0cm} \Pr \Big \{ \bigcap_{k=1}^{\Nus} \bigcap_{r=1}^{\Nrb} \UkrQuan = J_{\ell_{k,r}} \Big \}
    \end{equation*}
    \begin{equation*}
        \hspace{-2.0cm} \stackrel{(a)}{=} \Pr \Big \{ \bigcap_{k=1}^{\Nus} \bigcap_{r=1}^{\Nrb} \xi_{\ell_{k,r}} \leq \Ukr < \xi_{\ell_{k,r}+1} \Big \}
    \end{equation*}
    \begin{equation*}
        \hspace{-2.3cm} \stackrel{(b)}{=} \prod_{k=1}^{\Nus} \prod_{r=1}^{\Nrb} \Pr \{ \xi_{\ell_{k,r}} \leq \Ukr < \xi_{\ell_{k,r}+1} \}
    \end{equation*}
    \begin{equation}
        \label{eIndDistriOfUkrQuan}
        \hspace{-3.8cm} \stackrel{(c)}{=} \prod_{k=1}^{\Nus} \prod_{r=1}^{\Nrb} \Pr \{ \UkrQuan = J_{\ell_{k,r}} \}
    \end{equation}
}
where $(a)$ and $(c)$ follows from the quantization policy in \eqref{eQuanPolicy} and $(b)$ follows that $\Ukr$ is independent. Therefore, $\UkrQuan$ is independent. From \eqref{eIdenticalDistriOfUkrQuan} and \eqref{eIndDistriOfUkrQuan}, we find that $\UkrQuan$ is \iid.

\section{Derivation of the conditional \PMF}
\label{sDerivePmf}
Following the notations in \Section{sRsumQ4Tas}, let us suppose that $n$ users provided the quantization index at block-$r$.
The probability that the quantization index of a selected user is $J_\ell$ is the same as the probability that the maximum of $\UkrQuan$ for all users is $J_\ell$.
Thus it is given by\OneColumn
{
    \begin{equation}
        \label{eQuanIndProb}
        \hspace{-0.0cm} \Pr\{ J_\ell \textrm{ is selected} \hspace{0.1cm} | \hspace{0.1cm} |S_r|= n \} = \Pr \big\{ \max_{k' \in S_r} U_{k',r}^\QuanSup \leq J_\ell \big \} - \Pr \big\{ \max_{k' \in S_r} U_{k',r}^\QuanSup \leq J_{\ell-1} \big \}
    \end{equation}
    \begin{equation*}
        \hspace{0.0cm} \stackrel{(a)}{=} \Pr \big\{ \max_{k' \in S_r} U_{k',r} \leq \xi_{\ell + 1} \big \} - \Pr \big\{ \max_{k' \in S_r} U_{k',r} \leq \xi_{\ell} \big \} \stackrel{(b)}{=} \left\{ \Fu(\xi_{\ell+1}) \right\}^n -  \left\{\Fu(\xi_\ell) \right\}^n
    \end{equation*}
}\TwoColumn
{
    \begin{equation}
        \label{eQuanIndProb}
        \hspace{-4.0cm} \Pr\{ J_\ell \textrm{ is selected} \hspace{0.1cm} | \hspace{0.1cm} |S_r|= n \}
    \end{equation}
    \begin{equation*}
        \hspace{-0.0cm} = \Pr \big\{ \max_{k' \in S_r} U_{k',r}^\QuanSup \leq J_\ell \big \} - \Pr \big\{ \max_{k' \in S_r} U_{k',r}^\QuanSup \leq J_{\ell-1} \big \}
    \end{equation*}
    \begin{equation*}
        \hspace{0.0cm} \stackrel{(a)}{=} \Pr \big\{ \max_{k' \in S_r} U_{k',r} \leq \xi_{\ell + 1} \big \} - \Pr \big\{ \max_{k' \in S_r} U_{k',r} \leq \xi_{\ell} \big \}
    \end{equation*}
    \begin{equation*}
        \hspace{-3.5cm} \stackrel{(b)}{=} \left\{ \Fu(\xi_{\ell+1}) \right\}^n -  \left\{\Fu(\xi_\ell) \right\}^n
    \end{equation*}
}
where $(a)$ follows from the quantization policy in \eqref{eQuanPolicy} and $(b)$ follows from the order statistics \cite[2.1.1]{bDavidJwas04}.
Since user selection is based on \iid{} normalized \CQI{} values, the probability that each user is selected for a transmission is $\frac{1}{\Nus}$.
Considering that the modulation level is determined as $\rho c_k \xi_\ell$ for user-$k$ when it is selected, the conditional \PMF{} that $\XrQuan = \rho c_k \xi_\ell$ is given by \eqref{eCondPmf}.
We note that the sum of this probability over $n$ and $\ell$ is 1, which verifies the validity as the \PMF.

\section{Derivation of \eqref{eRsumQ}}
\label{sDeriveRsumQ}
From the conditional \PMF{} in \eqref{eCondPmf}, the sum rate for the system with partial feedback of quantized \CQI{} is given by\OneColumn
{
    \begin{equation*}
        \hspace{-0.8cm} \Rsum = \frac{1}{\Nrb} \sum_{r=1}^{\Nrb} \Ex [ \log (1 + \XrQuan ) ] \stackrel{(a)}{=} \Ex[\log(1+\XrQuan)] = \Ex_{{}_{|S_r|}} \Ex_{{}_{\XrQuan}}[\log(1+\XrQuan) \; | \; |S_r| = n \neq 0]
    \end{equation*}
    \begin{equation}
        \label{eRsumQIntermediate}
        \hspace{-0.4cm} \stackrel{(b)}{=} \sum_{k=1}^\Nus \sum_{\ell=0}^L \tfrac{\log_2 (1 + \rho c_k \xi_\ell)}{\Nus} \sum_{n=1}^\Nus \tbinom{\Nus}{n} \left( \tfrac{\NFb}{\Nrb} \right)^n \left( 1 - \tfrac{\NFb}{\Nrb} \right)^{\Nus-n} \left[ \left\{ \Fu(\xi_{\ell+1}) \right\}^n -  \left\{\Fu(\xi_\ell) \right\}^n \right],
    \end{equation}
}\TwoColumn
{
    \begin{equation*}
        \hspace{-0.8cm} \Rsum = \frac{1}{\Nrb} \sum_{r=1}^{\Nrb} \Ex [ \log (1 + \XrQuan ) ] \stackrel{(a)}{=} \Ex[\log(1+\XrQuan)]
    \end{equation*}
    \begin{equation*}
        \hspace{-0.8cm} = \Ex_{{}_{|S_r|}} \Ex_{{}_{\XrQuan}}[\log(1+\XrQuan) \; | \; |S_r| = n \neq 0]
    \end{equation*}
    \begin{equation*}
        \hspace{-0.8cm} \stackrel{(b)}{=} \sum_{k=1}^\Nus \sum_{\ell=0}^L \tfrac{\log_2 (1 + \rho c_k \xi_\ell)}{\Nus} \sum_{n=1}^\Nus \tbinom{\Nus}{n} \left( \tfrac{\NFb}{\Nrb} \right)^n
    \end{equation*}
    \begin{equation}
        \label{eRsumQIntermediate}
        \hspace{1.0cm} \times \left( 1 - \tfrac{\NFb}{\Nrb} \right)^{\Nus-n} \left[ \left\{ \Fu(\xi_{\ell+1}) \right\}^n -  \left\{\Fu(\xi_\ell) \right\}^n \right],
    \end{equation}
}
where $(a)$ follows from that $\XrQuan$ is identically distributed in $r$ and $(b)$ follows from the conditional \PMF{} of $\XrQuan$ in \eqref{eCondPmf} and the \PMF{} of $|S_r|$ in \eqref{ePmfS}.
From the binomial theorem  \cite{bGarciaAw94}, we have\OneColumn
{
    \begin{equation}
        \sum_{n=1}^\Nus \tbinom{\Nus}{n} \left( \tfrac{\NFb}{\Nrb} \right)^n \left( 1 - \tfrac{\NFb}{\Nrb} \right)^{\Nus-n} \left\{ \Fu(\xi_{\ell+1}) \right\}^n = \big \{1 - \tfrac{\NFb}{\Nrb} ( 1 - \Fu(\xi_{\ell+1}))\big \}^\Nus - \Big(1- \tfrac{\NFb}{\Nrb} \Big)^\Nus.
    \end{equation}
}\TwoColumn
{
    \begin{equation*}
        \hspace{-1.0cm} \sum_{n=1}^\Nus \tbinom{\Nus}{n} \left( \tfrac{\NFb}{\Nrb} \right)^n \left( 1 - \tfrac{\NFb}{\Nrb} \right)^{\Nus-n} \left\{ \Fu(\xi_{\ell+1}) \right\}^n
    \end{equation*}
    \begin{equation}
        = \big \{1 - \tfrac{\NFb}{\Nrb} ( 1 - \Fu(\xi_{\ell+1}))\big \}^\Nus - \Big(1- \tfrac{\NFb}{\Nrb} \Big)^\Nus.
    \end{equation}
}
Thus, \eqref{eRsumQIntermediate} reduces to \eqref{eRsumQ} for $\FnRsumQ (x,y,z,r)$ in \eqref{eRsumQ}.

\bibliographystyle{\TexComDir/IEEEbib}
\bibliography{\TexComDir/MyRefs}

\begin{thebibliography}{10}

\bibitem{bSklarCm972}
B.~Sklar,
\newblock ``{R}ayleigh fading channels in mobile digital communication systems
  {P}art {II}: {M}itigation,''
\newblock {\em IEEE Commun. Mag.}, vol. 35, no. 7, pp. 102--109, Jul. 1997.

\bibitem{bBrennanProc03}
D.~G. Brennan,
\newblock ``Linear diversity combining techniques,''
\newblock {\em Proceedings of IEEE}, vol. 91, no. 2, pp. 331--356, Feb. 2003.

\bibitem{bMietznerScom09}
J.~Mietzner, R.~Schober, L.~Lampe, W.H. Gerstacker, and P.A. Hoeher,
\newblock ``Multiple-antenna techniques for wireless communications - {A}
  comprehensive literature survey,''
\newblock {\em Communications Surveys Tutorials, IEEE}, vol. 11, no. 2, pp.
  87--105, 2009.

\bibitem{bKnoppIcc95}
R.~Knopp and P.~Humblet,
\newblock ``Information capacity and power control in single-cell multiuser
  communications,''
\newblock in {\em Proc. 1995 IEEE Intl. Conf. Commun.}, Seattle, WA, Jun. 1995,
  pp. 331--335.

\bibitem{bViswanathTit02}
P.~Viswanath, D.~N.~C. Tse, and R.~Laroia,
\newblock ``Opportunistic beamforming using dumb antennas,''
\newblock {\em IEEE Trans. Inf. Theory}, vol. 48, no. 6, pp. 1277--1294, Jun.
  2002.

\bibitem{bLiuCn03}
X.~Liu, E.~K. Chong, and N.~B. Shroff,
\newblock ``A framework for opportunistic scheduling in wireless networks,''
\newblock {\em Computer Networks}, vol. 41, no. 4, pp. 451--474, Mar. 2003.

\bibitem{bHarthiTwc07}
Y.S. Al-Harthi, A.H. Tewfik, and M.-S. Alouini,
\newblock ``Multiuser diversity with quantized feedback,''
\newblock {\em IEEE Trans. Wireless Commun.}, vol. 6, no. 1, pp. 330--337, Jan.
  2007.

\bibitem{bChoiTvt07}
Jin-Ghoo Choi and Saewoong Bahk,
\newblock ``Cell-throughput analysis of the proportional fair scheduler in the
  single-cell environment,''
\newblock {\em IEEE Trans. Veh. Technol.}, vol. 56, no. 2, pp. 766--778, Mar.
  2007.

\bibitem{bHurPrep10}
S.~H.~(Paul) Hur, M.~J. Rim, B.~D. Rao, and J.~R. Zeidler,
\newblock ``On the optimal frequency selectivity to multiuser diversity in
  {OFDMA} scheduling systems,''
\newblock in preparation.

\bibitem{bChenTcom06}
C.-J. Chen and Li-Chun Wang,
\newblock ``A unified capacity analysis for wireless systems with joint
  multiuser scheduling and antenna diversity in {N}akagami fading channels,''
\newblock {\em IEEE Trans. Commun.}, vol. 54, no. 3, pp. 469--478, Mar. 2006.

\bibitem{bPughTsp10}
M.~Pugh and B.D. Rao,
\newblock ``Reduced feedback schemes using random beamforming in {MIMO}
  broadcast channels,''
\newblock {\em IEEE Trans. Sig. Processing}, vol. 58, no. 3, pp. 1821--1832,
  Mar. 2010.

\bibitem{bTs36.201.201003}
3GPP WG1,
\newblock {\em {TS} 36.201 V9.1.0 - {LTE} Physical Layer - {G}eneral
  Description ({R}elease 9)},
\newblock Mar. 2010.

\bibitem{bLoveJsac08}
D.~J. Love, R.~W.~Heath Jr., V.~K.~N. Lau, D.~Gesbert, B.~D. Rao, and
  M.~Andrews,
\newblock ``An overview of limited feedback in wireless communciation
  systems,''
\newblock {\em IEEE J. Selected Areas Commun.}, vol. 26, no. 8, pp. 1341--1365,
  Oct. 2008.

\bibitem{bErikssonPrcd07}
T.~Eriksson and T.~Ottosson,
\newblock ``Compression of feedback for adaptive transmission and scheduling,''
\newblock {\em Proceedings of IEEE}, vol. 95, no. 12, pp. 2314--2321, Dec.
  2007.

\bibitem{bHasselTwc07}
V.~Hassel, D.~Gesbert, M.~Alouini, and G.~E. Oien,
\newblock ``A threshold-based channel state feedback algorithm for modern
  cellular systems,''
\newblock {\em IEEE Trans. Wireless Commun.}, vol. 6, no. 7, pp. 2422--2426,
  Jul. 2007.

\bibitem{bGesbertIcc04}
D.~Gesbert and M.-S. Alouini,
\newblock ``How much feedback is multi-user diversity really worth?,''
\newblock in {\em Proc. 2004 IEEE Intl. Conf. Commun.}, Jun. 2004, vol.~1, pp.
  234--238.

\bibitem{bJungIscit07}
B.~C. Jung, T.~W. Ban, W.~Choi, and D.~K. Sung,
\newblock ``Capacity analysis of simple and opportunistic feedback schemes in
  {OFDMA} systems,''
\newblock in {\em Proc. 2007 {IEEE} Intl. Symp. on Comm. and Info. Techn.},
  Oct. 2007, pp. 203--208.

\bibitem{bLeinonenTwc09}
J.~Leinonen, J.~H\"{a}m\"{a}l\"{a}inen, and M.~Juntti,
\newblock ``Performance analysis of downlink {OFDMA} resource allocation with
  limited feedback,''
\newblock {\em IEEE Trans. Wireless Commun.}, vol. 8, no. 6, pp. 2927--2937,
  Jun. 2009.

\bibitem{bChoiTmc10}
Y.~Choi and S.~Rangarajan,
\newblock ``Analysis of best channel feedback and its adaptive algorithms for
  multi-carrier wireless data systems,''
\newblock {\em IEEE Trans. Mobile Computing}, pre-print 2011.

\bibitem{bChenTvt05}
Z.~Chen, J.~Yuan, and B.~Vucetic,
\newblock ``Analysis of transmit antenna selection/maximal-ratio combining in
  {R}ayleigh fading channels,''
\newblock {\em IEEE Trans. Veh. Technol.}, vol. 54, no. 4, pp. 1312--1321, Jul.
  2005.

\bibitem{bDammannGcom01}
A.~Dammann and S.~Kaiser,
\newblock ``Standard conformable antenna diversity techniques for {OFDM}
  systems and its application to the {DVB-T} system,''
\newblock in {\em Proc. 2001 IEEE Global Telecomm. Conf.}, 2001, pp.
  3100--3105.

\bibitem{bAlamouti98}
S.~M. Alamouti,
\newblock ``A simple transmit diversity technique for wireless
  communications,''
\newblock {\em IEEE J. Selected Areas Commun.}, vol. 16, pp. 1451--1458, Oct.
  1998.

\bibitem{bSharifTit05}
M.~Sharif and B.~Hassibi,
\newblock ``On the capacity of {MIMO} broadcast channel with partial side
  information,''
\newblock {\em IEEE Trans. Inf. Theory}, vol. 51, no. 2, pp. 506--522, Feb.
  2005.

\bibitem{bSongTcom06}
Guocong Song and Ye~Li,
\newblock ``Asymptotic throughput analysis for channel-aware scheduling,''
\newblock {\em IEEE Trans. Commun.}, vol. 54, no. 10, pp. 1827--1834, Oct.
  2006.

\bibitem{bJohnsonWiley94}
N.~L. Johnson, S.~Kotz, and N.~Balakrishnan,
\newblock {\em Continuous Univariate Distributions},
\newblock John Wiley, MA, 2nd edition, 1994.

\bibitem{bGradshteynAp00}
I.~S. Gradshteyn and I.~M. Ryzhik,
\newblock {\em Table of Integrals, Series, and Products},
\newblock Academic Press, 2000.

\bibitem{bDavidJwas04}
H.~A. David and H.~N. Nagaraja,
\newblock {\em Order Statistics},
\newblock John Wiley \& Sons Inc., 3rd edition, 2004.

\bibitem{bGarciaAw94}
A.~Leon-Garcia,
\newblock {\em Probability and Random Processes for Electrical Engineering},
\newblock Addison-Wesley, MA, 2nd edition, 1994.

\bibitem{bJorswieckEurasip09}
E.A. Jorswieck, A.~Sezgin, and Xi~Zhang,
\newblock ``Throughput versus fairness channel-aware scheduling in multiple
  antenna downlink,''
\newblock {\em EURASIP Journal on Wireless Communications and Networking}, vol.
  2009, pp. 1--13, 2009.

\bibitem{bAbramowitzUs70}
M.~Abramowitz and I.~A. Stegun,
\newblock {\em Handbook of mathematical functions with formulas, graphs, and
  mathematical tables},
\newblock US Dept. of Commerce, New York: Dover, 9th edition, 1970.

\bibitem{bProakisMgh95}
J.~G. Proakis,
\newblock {\em Digital Communications},
\newblock McGraw Hill, New York, 3rd edition, 1995.

\bibitem{bJafarkhaniAp05}
H.~Jafarkhani,
\newblock {\em Space-Time Coding},
\newblock Academic Press, 2005.

\bibitem{bChenTcom06ForLetter}
C.~Chen and L.~Wang,
\newblock ``A unified capacity analysis for wireless systems with joint
  multiuser scheduling and antenna diversity in {N}akagami fading channels,''
\newblock {\em IEEE Trans. Commun.}, vol. 54, Mar. 2006.

\bibitem{bFedeleETcom96}
G.~Fedele,
\newblock ``{$N$}-branch diversity reception of {$M$}-ary {DPSK} signals in
  slow and nonselective {N}akagami fading,''
\newblock {\em Eur. Trans. Commun.}, vol. 7, pp. 119--123, Mar. 1996.

\bibitem{bAlouiniTvt99}
M.-S. Alouini and A.J. Goldsmith,
\newblock ``Capacity of {R}ayleigh fading channels under different adaptive
  transmission and diversity-combining techniques,''
\newblock {\em IEEE Trans. Veh. Technol.}, vol. 48, no. 4, pp. 1165--1181, Jul.
  1999.

\end{thebibliography}

\end{document}